\newtheorem{theorem}{Theorem}
\newtheorem{lemma}[theorem]{Lemma}
\newtheorem*{definition}{Definition}
\newtheorem{remark}{Remark}
\newtheorem{property}{Property}
\begin{document}
\title{Latency and Reliability Trade-off with Computational Complexity Constraints: \\ OS Decoders and Generalizations}
\author{
	\IEEEauthorblockN{Hasan Basri Celebi$^\dagger$, Antonios Pitarokoilis$^\ddagger$, Mikael Skoglund$^\dagger$}
	\\
	\IEEEauthorblockA{
		$^\dagger$KTH Royal Institute of Technology, Stockholm, Sweden
		\\
		$^\ddagger$Ericsson AB, Stockholm, Sweden
	}
\thanks{This work was funded in part by the Swedish Foundation for Strategic Research (SSF) under grant agreement RIT15-0091.}
}

\maketitle

\begin{abstract}
	In this paper, we study the problem of latency and reliability trade-off in ultra-reliable low-latency communication (URLLC) in the presence of decoding complexity constraints. We consider linear block encoded codewords transmitted over a binary-input AWGN channel and decoded with order-statistic (OS) decoder. We first investigate the performance of OS decoders as a function of decoding complexity and propose an empirical model that accurately quantifies the corresponding trade-off. Next, a consistent way to compute the aggregate latency for complexity constrained receivers is presented, where the latency due to decoding is also included. It is shown that, with strict latency requirements, decoding latency cannot be neglected in complexity constrained receivers. Next, based on the proposed model, several optimization problems, relevant to the design of URLLC systems, are introduced and solved. It is shown that the decoding time has a drastic effect on the design of URLLC systems when constraints on decoding complexity are considered. Finally, it is also illustrated that the proposed model can closely describe the performance versus complexity trade-off for other candidate coding solutions for URLLC such as tail-biting convolutional codes, polar codes, and low-density parity-check codes.
\end{abstract}

\begin{IEEEkeywords}
	5G mobile communication, URLLC, internet-of-things, low-latency communication, ultra-reliable communication, low-complexity receivers, channel coding, order-statistic decoder.
\end{IEEEkeywords}

\IEEEpeerreviewmaketitle

\section{Introduction}

Ultra-reliable low-latency communication (URLLC) is one of the three main service categories that have been defined in 5G, with the other two being enhanced mobile broadband and massive machine-type communication \cite{3gpp_new_radio}. URLLC provides communication support with stringent constraints on reliability and end-to-end latency and has attracted extensive attention and significant research interest, since information transmission with low-latency and high reliability is crucial for enabling various mission-critical services, such as machine-to-machine communication, remote surgery, augmented reality, vehicle automation, industrial robotics, factory automation, and smart-grid \cite{bennis_ultra_reliable}. 

Reliable communication is often characterized by channel capacity \cite{shannon_a_mathematical}. Since capacity is the ultimate error-free transmission rate as the codeword length becomes arbitrary large, it is mostly appropriate for latency-tolerant communication systems. In the existing literature, the performance of a latency-constrained communication system is often evaluated on the basis of the outage capacity \cite{yang_block_fading}, which is the maximal transmission rate  such that the probability of the instantaneous mutual information falling below this rate does not exceed a desired outage threshold. Similar to channel capacity, 
outage capacity is also most appropriate for arbitrarily large codewords\cite{verdu_a_general}. However, with the stringent latency constraints of URLLC systems, the assumption on arbitrarily large codeword blocklength cannot be justified \cite{ji_ultra_reliable}. Although research on maximal achievable transmission rates for finite blocklengths has a history going back to the 1960s \cite{gallager_a_simple}, a significant amount of progress has been achieved in the context of non-asymptotic information theory in the recent years (see \cite{molavianJazi_a_unified} and references therein). Non-asymptotic achievability and converse bounds for the finite blocklength regime are derived in \cite{polyanskiy_channel_coding}. It is shown that, compared to the asymptotic limits, a rate penalty needs to be paid when transmitting in the finite blocklength regime. This study attracted significant interest from the research community and several studies on the non-asymptotic achievable bounds for various channels with different fading environments have been published \cite{yang_quasi_static, huang_finite_blocklength, yang_beta_beta}.

Although the non-asymptotic achievable bounds reveal the theoretical limits, achieving them is still an open problem. Therefore, the selection of a channel encoding and decoding scheme that can perform close to the limit is significant in terms of increasing the transmission efficiency of the communication system. Several coding schemes that are suitable for URLLC are introduced in \cite{liva_code_design, celebi_wireless_comm, husain_channel_coding, sharma_polar_code, lian_performance, van_wonterghem_performance, jiang_latency_performance}. Their performances in the finite blocklength regime are also shown therein where performance of a decoder, in general, is identified according to its gap to the non-asymptotic limits. However, although it is observed that some channel coding schemes can perform very close to the limits, computational complexity is neither taken into account in the comparisons of the coding schemes nor in the derivation of the theoretical limits. 

There exists no generally accepted measure for the computational complexity of a typical channel decoder. Nevertheless, the total number of operations \textit{per-information-bit} is often selected as a metric for the computational complexity \cite{pfister_capacity_achieving, bae_capacity_achieving}. In \cite{sybis_channel_coding}, the computational complexity of several decoding algorithms, suitable for URLLC, is presented. Based on these results, it is shown in \cite{shivarnimoghaddam_short_block} that complexity of the coding schemes exponentially increases as they approach to the theoretical limits. It is also further shown that an excess power with respect to the theoretical limits must be spent to achieve a fixed allowed error rate at a fixed transmission rate, when a particular code is chosen. As discussed in \cite{kienle_on_complexity} and \cite{kestell_when_channel}, latency due to the computational complexity of a decoder is inversely proportional to the average computational power of a processor, in terms of speed. Therefore, a computationally intensive decoding process takes relatively longer duration in a complexity constrained receiver, such as low-budget IoT receiver. \cite{savage_complexity_of, savage_the_complexity}. In such applications, latency due to decoding is a significant determinant of decoder cost \cite{kim_ultrareliable}.

Latency due to the decoding of a packet is neglected in several studies as it is assumed that decoding happens instantaneously \cite{grover_fundamental_limits, hehn_ldpc_codes, rachinger_comparison_of, maiya_coding_with}. In \cite{hehn_ldpc_codes, rachinger_comparison_of} and \cite{maiya_coding_with}, even though the decoding latency is assumed to be negligible, the inevitable delays due to structural properties of low-density parity-check (LDPC) and convolutional codes are investigated and performance comparisons in case of equal structural delays are presented. A similar analysis on structural delay for learning-based coding schemes is recently presented in \cite{jiang_learn_codes}.  Decoding latency for the state-of-the-art codes such as LDPC and polar codes is investigated in \cite{wu_decoding_latency, qin_low_latency, fan_a_low}, in which low-complexity decoding schemes have been proposed. Recently, extended Bose, Ray-Chaudhuri, Hocquenghem (eBCH) codes \cite{reed_error_control} with order-statistic (OS) decoders \cite{fossorier_soft_decision} have gained interest of the research community due to their good performance in finite blocklength regime \cite{zaidi_5g_physical, yue_a_revisit}. It is shown that OS decoder performs close to the maximum likelihood (ML) decoder for linear block codes with substantially lower decoding complexity. To further reduce decoding latency, a low complexity decoding algorithm is proposed in \cite{choi_fast_and}.

This work differs substantially from the listed references as we consider the decoding latency as a performance metric for the system design for OS decoders. The goal of this paper is to investigate the maximal performance limits of short packet communications when the decoding complexity of OS decoders is taken into account. For this purpose, several significant design problems for URLLC applications are investigated. For instance, in order to decrease the aggregate latency, one may select an OS decoder with relatively lower complexity, which in turn may compromise the error probability of the decoder. This, therefore, reveals trade-offs among latency, computational complexity, and reliability. This implies that a refined modelling of these parameters must be considered. In this study, analyses on decoding latency and reliability are presented which are based on the \textit{per-information-bit} computational complexity of OS decoders.  Although the main focus is on OS decoders, we also discuss on the applicability of the proposed  model to the other families of codes. 

\textit{Contributions}: This work extends the authors' previous work on analyses of low-latency communication with computational complexity constrained OS decoders \cite{celebi_low_latency}. In this paper the following contributions are presented.
\begin{itemize}
	\item First, a consistent way to compute the aggregate latency due to the OS decoding process for complexity constrained receivers is presented.
	
	\item A mathematically tractable model that can accurately show the trade-off between the computational complexity of the OS decoder, in number of binary operations \textit{per-information-bit}, versus the excess power to the non-asymptotic achievability bound is introduced.
	
	\item With the help of this model, we address non-trivial optimization problems that are related to URLLC systems with OS decoders and computational complexity constraints. The following optimization problems are investigated:
	\begin{itemize}
		\item Given that a fixed number of information bits are intended to be transmitted under reliability and power constraints, what is the optimum selection of transmission parameters  that leads to the minimum aggregate latency?
		
		\item Given that a fixed number of information bits are intended to be transmitted under reliability, power, and latency constraints, what is the optimum selection of transmission parameters that leads to the minimum energy-per-bit?
		
		\item Under reliability, power, and latency constraints, what is the optimum selection of transmission parameters that leads to the maximum number of information bits to be transmitted?
	\end{itemize}
	
	\item It is also illustrated that other families of codes, such as tail-biting convolutional codes (TBCCs), LDPC, and polar codes, follow similar trends on the trade-off between computational complexity versus excess power, which implies that the proposed model can be adapted to be suitable for these families of codes as well.
\end{itemize} 

Solutions to the optimization problems reveal that the optimal parameter choices are directly associated with the constraints. Thus, the optimal design of a URLLC system is substantially influenced when decoding latency is taken into consideration. 

\textit{Notation}: Vectors and matrices are denoted by bold face lower and upper case letters, respectively. We use $ \mathcal{N}(\boldsymbol{\mu},\boldsymbol{\Sigma}) $ to denote independent real Gaussian random variables with mean $\boldsymbol{\mu}$ and covariance matrix $ \boldsymbol{\Sigma} $. All logarithms in this paper are with base 2 and $ \oplus $ represents the binary addition.

\section{System Model}\label{sec_system_model}

We consider communication over a discrete-time, binary-input AWGN (BI-AWGN) channel. A sequence of $ n $ symbols  $ \boldsymbol{x} = [ x_1, x_2, \ldots, x_n ], ~ x_i \in \{-1,+1\} $, which is termed as codeword, is transmitted over the channel. The observed sequence at the receiver is 
\begin{align}\label{eq_system_model}
\boldsymbol{y} = \sqrt{\rho}\boldsymbol{x} + \boldsymbol{z},
\end{align}
where $\boldsymbol{z}\sim\mathcal{N}(\boldsymbol{0},\boldsymbol{I}_n)$ and $\rho$ denotes the signal-to-noise ratio (SNR).

Without latency constraints, it is known that there exists a codebook, i.e., collection of codewords, with size $ 2^{nr}$ codewords, given that $ r < C $, where $ r $ is the transmission rate and $ C $ is called the channel capacity \cite{shannon_a_mathematical}, such that the codeword error probability (CEP)\footnote{That is the probability that the receiver decides in favor of a codeword that is different from the one actually sent.} vanishes as $ n \rightarrow \infty $. The capacity $ C $ of the channel in \eqref{eq_system_model} is given, as a function of $ \rho $, by \cite{erseghe_coding_in}
\begin{equation}\label{eq_channel_capacity}
C = \frac{1}{\sqrt{2\pi}} \int e^{-\frac{z^2}{2}}  \left( 1-\log\left( 1+e^{-2\rho+2z\sqrt{\rho}} \right)  \right) \mathrm{d}z.
\end{equation}
With strict latency constraints, however, i.e., when $ n $ is not allowed to take arbitrarily large values, the CEP is strictly positive and $ C $ overestimates the rate of reliable information transmission through a BI-AWGN channel. Recently, more refined upper and lower bounds on the maximal codebook size have been proposed for finite $ n $ and a non-zero CEP, $ \epsilon > 0 $ \cite{ polyanskiy_channel_coding}. Denote the maximal codebook size with codewords of length $ n $ and CEP $ \epsilon $ by $ 2^{n R^*} $, where $ R^* $ denotes the maximal transmission rate. Based on the bounds in \cite{polyanskiy_channel_coding}, it is shown that the maximal codebook size can be well approximated for a wide range of $ n $ and $ \epsilon $ by $ 2^{n R^*} \approx 2^{nR(n,\rho,\epsilon)} $, where
\begin{equation}\label{eq_normal_approximation}
R(n,\rho,\epsilon) = C - \sqrt{\frac{V}{n}} Q^{-1}(\epsilon) \log e + \mathcal{O} \left(\frac{\log n}{n}\right) .
\end{equation}
The quantity $ V $ is the channel dispersion, and for the channel in \eqref{eq_system_model} is
\begin{equation}\label{eq_channel_dispersion}
V = \frac{1}{\sqrt{2\pi}} \int e^{-\frac{z^2}{2}}  \Big( 1 - \log\left( 1+e^{-2\rho+2z\sqrt{\rho}} \right)  - C \Big)^2 \mathrm{d}z ,
\end{equation}
and $ Q^{-1}(\cdot) $ is the inverse of the Gaussian $ Q- $function
$ Q(x) = \int_{x}^{\infty}\frac{1}{\sqrt{2\pi}} e^{-\frac{t^2}{2}} \mathrm{d}t $ ,
and finally the big $ \mathcal{O}(\cdot) $ notation describes the limiting behavior of the third term as $ n \rightarrow \infty $. The expression in \eqref{eq_normal_approximation} is termed as the normal approximation to the maximal coding rate. In our analysis, we take the first two terms of \eqref{eq_normal_approximation} into account and treat it as if it were exact, with the implicit understanding that the terms of order $ \mathcal{O}(\cdot) $ and smaller are omitted.

From a communication point of view the aggregate latency of transmission of a codeword is the difference in time between the entry of a given bit to the communication interface at the transmitter and the time it exits the communication interface at the receiver. Therefore, it can be divided into three main parts:  \emph{i}) latency at the transmitter, \emph{ii}) transmission and propagation latencies, and \emph{iii}) latency at the receiver. Latency at the transmitter and receiver can be separated into many sub-parts such as latency due to the encoding, buffering, signal processing, interleaving and decoding. The transmission latency is the time required so that all the symbols of a codeword are sent into the channel, hence it is proportional to the codeword length. Latency due to signal propagation is inevitable due to physical constraints, it is a constant with respect to the choice of encoding/decoding scheme. Buffering, filtering, interleaving, etc. can also be  sources of latency, however, latency incurred by all these operations is not related to the choice of the encoding/decoding scheme. Hence, they simply add a small constant to the aggregate latency and therefore we neglect this constant in our further analysis. The main focus of this paper is to investigate the effect of decoding latency for complexity constrained receivers. Thus, we only focus on transmission and decoding latencies.

It is assumed that the transmission latency of a codeword is $ n T_s $ seconds, where $ T_s $ is the symbol duration. Thus, the aggregate latency $ L_A $ is considered as 
\begin{equation}\label{key}
L_A = n T_s + L_D ,
\end{equation}
with $ L_D $ denoting the decoding latency.\footnote{It is assumed that the decoding starts right after the whole codeword is received. A more indepth investigation is presented in \cite{hehn_ldpc_codes}.} It is one of the aims of this work to propose a model that describes in a general, accurate, and tractable way the latency introduced due to decoding based on the OS decoding algorithm given in \cite{fossorier_soft_decision}. 

\section{Modeling the Decoding Complexity}\label{sec_Decoding_Complexity_Modeling}

The decoding complexity model that is proposed in the present paper is based on linear block codes with OS decoders, originally presented in \cite{fossorier_soft_decision}. The three most important reasons of this selection can be listed as follows. \textit{i}) In prior works, \cite{liva_code_design, zaidi_5g_physical}, it has been shown that there exist linear block codes with OS decoders that can perform very close to the information-theoretic bounds for finite $ n $. \textit{ii}) The decoding performance of OS decoders can be easily parameterized by a single parameter, i.e., the order, $ s \in \mathbb{Q} $ \cite{fossorier_soft_decision}. \textit{iii}) Finally, the operations that are executed during decoding can be accurately tracked and the decoding complexity can be efficiently and intuitively described.

An uncoded binary information sequence $ \boldsymbol{u} = [u_1,u_2,\ldots,u_k], ~~ u_i \in \{0,1\} $, of $ k \leq n $ bits is mapped to an encoded binary sequence  $ \boldsymbol{b} = \boldsymbol{u} \boldsymbol{G} $, where $ \boldsymbol{G} \in \{0,1\}^{k \times n} $ is the generator matrix, of $ n $ bits which are then mapped to the transmitted codeword $ \boldsymbol{x} $ using the rule $ x_i = 2 b_i - 1 $. At the decoder, we consider the use of an OS decoder with order$ -s $. The components $ \{ y_i \}_{i=1}^n $ of the observed sequence $ \boldsymbol{y} $ are sorted in order of descending amplitudes and the hard-decoded $ k $ most-reliable bit sequence, $ \boldsymbol{r} $, is obtained. We denote the resulting permutation by $ \kappa(\cdot) $. The columns of $ \boldsymbol{G} $ are reordered by the same permutation, $ \kappa(\cdot) $, and Gauss-Jordan elimination is applied to form the corresponding systematic generator matrix $ \boldsymbol{G}_\kappa $.\footnote{It is possible that the first $k$ columns of the permuted $\boldsymbol{G}$ matrix can be linearly dependent. In this case, reaching to a new systematic generator matrix $\boldsymbol{G}_\kappa$ is not possible. Therefore, a second permutation is needed that will guarantee the first $k$ columns to be independent and $ |y'_1| \geq \cdots \geq |y'_k| $ and $ |y'_{k+1}| \geq \cdots \geq |y'_n| $, where $ y'_i $ represents the $ i $th element of the sorted $ \boldsymbol{y} $. Of course this may add some additional complexity terms. However, for the purpose of this paper, we neglect these additional terms.} Associated with $ s $, a list, $ \mathcal{L}_\text{TEP} $, of
\begin{equation}\label{eq_number_of_TEPs}
|\mathcal{L}_\text{TEP}| = \sum_{i=0}^{\lfloor s\rfloor} {{k}\choose{i}} + \left\lfloor \left(s - \lfloor s\rfloor\right)  {{k}\choose{\lfloor s\rfloor + 1} } \right\rfloor
\end{equation}
test error patterns (TEPs), i.e., bit sequences of length $ k $, denoted as $ \boldsymbol{e}_i $, is formed. This list includes all the TEPs with Hamming weight $ \leq \lfloor s \rfloor $ and the most probable TEPs with Hamming weight $ \lfloor s \rfloor + 1 $, which can be computed based on the probability of having $ \lfloor s \rfloor + 1 $ number of errors at different locations in the first $ k $ bits of the hard decoded $ \kappa(\boldsymbol{y}) $ \cite[Lemma 1]{yue_a_revisit}. The set of test codewords is then formed by mapping
\begin{equation}\label{key}
(\boldsymbol{r} \oplus \boldsymbol{e}_i) \boldsymbol{G}_\kappa,  ~~ \boldsymbol{e}_i \in \mathcal{L}_\text{TEP} .
\end{equation}
The test codeword that minimizes the Euclidean distance between the permuted sequence $ \kappa(\boldsymbol{y}) $ is selected as the most probable test codeword. The decoded information sequence is then produced by performing the inverse permutation, $ \kappa^{-1}(\cdot) $, and selecting the first $ k $ bits.

For notational consistency, given a fixed codebook $ \mathcal{C} $ containing $ 2^k $ codewords of length $ n $, we denote an OS decoder as $ \mathrm{d} ( \mathcal{C}, s, \rho) $, where it is meant that the OS decoder of order$ -s $ operates on the given codebook $ \mathcal{C} $ at SNR $ \rho $, and $ \epsilon(\mathcal{C}, s, \rho) $ denotes the achieved CEP with the codebook $ \mathcal{C} $ at SNR $ \rho $ with order$ -s $. Focusing on the computation intensive operations, the total number of binary operations \textit{per-information-bit} of an observed sequence, $ \boldsymbol{y}$, when the decoder $ \mathrm{d} ( \mathcal{C}, s, \rho) $ is used, can be calculated by \cite[Ch. 7.1]{gareth_linear_algebra}, \cite{baldi_on_the_use}\footnote{One can add additional complexity terms due to the sorting and inverse permutation processes. However, since their total complexities are relatively smaller compared to the terms in \eqref{eq_complexity_of_os_dec} for short blocklengths, we skip them and adopt \eqref{eq_complexity_of_os_dec} in further analysis.}
\begin{equation}\label{eq_complexity_of_os_dec}
K( \mathcal{C},s ) = nk + \frac{|\mathcal{L}_\text{TEP}|}{2} \left( n - q + \frac{qn}{k} \right) ,
\end{equation}
where $ q $ represents the number of quantization bits. The first term in \eqref{eq_complexity_of_os_dec} is due to the Gauss-Jordan elimination of the permuted $ \boldsymbol{G} $ matrix and the second term is due to the mapping of the set of test codewords and comparisons with $ \kappa(\boldsymbol{y}) $. 
When $ s < 2 $, \eqref{eq_complexity_of_os_dec} is dominated by the Gauss-Jordan elimination and for $ s \geq 2 $, the second part dominates the complexity. To address the limiting behavior of $ K( \mathcal{C},s ) $, one can use Stirling's approximation, given as $x! = \Gamma(x+1) \approx \sqrt{2\pi} x^{x+\frac{1}{2}} e^{-x}$, where $ \Gamma(\cdot) $ is the Gamma function, $ \Gamma(z) = \int_{0}^{\infty} x^{z-1} e^{x} \mathrm{d}x $. Implementing Stirling's approximation into binomial coefficient we have
\begin{equation}
\label{eq_binomial_stirling}
{{k}\choose{s} } \approx \left( 1-\frac{s}{k} \right)^{s-\frac{1}{2}} \frac{e^{-k}}{\left( 1-\frac{s}{k} \right)^k} \frac{k^s}{\Gamma(s+1)} = \mathcal{O}\left( {k^s} \right)
\end{equation}
since as $k \rightarrow \infty$ the first term, the denominator in the middle term, and $e^{-k}$ tend to 1. Then, we are only left with $\frac{k^s}{\Gamma(s+1)}$. Thus, the complexity order of OS decoder can be expressed as $ K( \mathcal{C},s ) = \mathcal{O}(nk^s) $.

The choice of the order, $ s $, limits the search space for the most probable test codeword by limiting the size of the list, $ \mathcal{L}_\text{TEP} $. In comparison to the ML decoder, that performs in general an exhaustive search over the codebook, which entails exponential complexity in $ k $, a choice of a moderate $ s $ leads to substantial reduction in decoding complexity. As a side comment, it is shown in \cite{fossorier_soft_decision} that the required order, $ s_r $, to achieve the ML decoder performance is approximately $ s_r = \min\left\lbrace \frac{d_{\min}}{4}-1, k \right\rbrace $, where $ d_{\min} $ denotes the minimum Hamming distance. 

Some easily verifiable properties hold for the relative performance of two decoders operating on the same codebook follows:
\begin{property}
	Let two decoders, $ \mathrm{d}_1(\mathcal{C},s_1, \rho) $ and $ \mathrm{d}_2(\mathcal{C},s_2, \rho) $, operate on the same codebook $ \mathcal{C} $ with $ s_1 \leq s_2 $ at the same SNR. It follows immediately by the selection of the TEP lists that $ \mathcal{L}_{\text{TEP},1} \subseteq \mathcal{L}_{\text{TEP},2} $, which implies $ K_1(\mathcal{C},s_1) \leq K_2(\mathcal{C},s_2) $ and $ \epsilon_1(\mathcal{C}, s_1, \rho) \geq \epsilon_2(\mathcal{C}, s_2, \rho) $ for every $ \rho $. Intuitively,  more complex decoder leads to lower CEP.
\end{property}
\begin{property}
	In addition, let two decoders $ \mathrm{d}_1(\mathcal{C},s, \rho_1) $ and $ \mathrm{d}_2(\mathcal{C},s, \rho_2) $, operate on the same codebook $ \mathcal{C} $ with different SNR levels given that $ \rho_1 \leq \rho_2 $. Then, it must be true that $ \epsilon_1(\mathcal{C}, s, \rho_1) \geq \epsilon_2(\mathcal{C}, s, \rho_2) $ where complexities of two decoders are the same. Intuitively, higher operating SNR leads to lower CEP.
\end{property}

Numerical performance results for OS decoders with orders $ s=\{0,1,2,3,4,5\} $ for $ n=128 $ and $ k=64 $ are shown in Fig. \ref{fig_CER_n_128_k_64}, where eBCH code with $ d_{\min}=22 $ is used for the encoding at the transmitter and the error bound is derived from \eqref{eq_normal_approximation}. Fig. \ref{fig_CER_n_128_k_64} shows that as the order$ -s $ increases, the performance of the decoder improves and it is near-optimal for $ s=s_r=5 $. However, as shown in (10) and (11), the increase in the order of the decoder leads to an exponential increase in the decoding complexity.

\begin{figure}[t]
	\centering
	\includegraphics[width=1\linewidth]{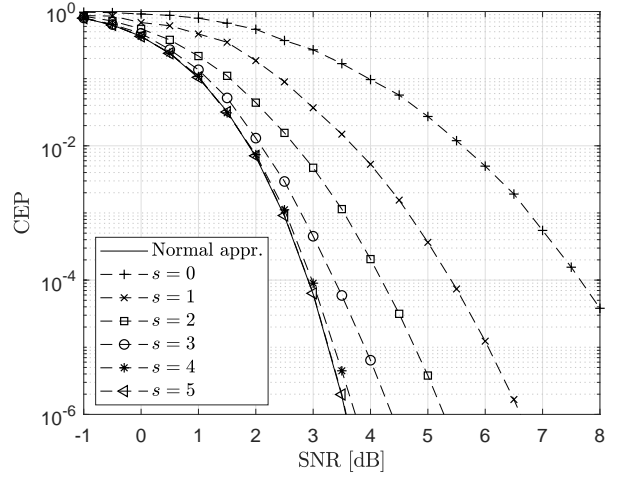}
	\caption{CEP performace of eBCH code with OS decoder at different orders compared to the the normal approximation error rate bound for BI-AWGN channel where $ n=128 $ and $ k=64 $.}
	\label{fig_CER_n_128_k_64}
\end{figure}

\section{Latency with Decoding Complexity Constraints}
\label{sec_complexity_constrained}

The transmission latency, $ L_T $, is proportional to the blocklength, $ n, $ however, the decoding latency depends on multiple parameters, i.e., $ (n,k,s), $ in a more complicated way. With stringent aggregate latency and reliability requirements, the optimal selection of the various parameters becomes a non-trivial task.

\subsection{Decoding Latency}

The decoding time of an OS decoder is influenced by a series of factors such as the particular hardware platform. For simplicity and generality, we assume that the binary operations are handled sequentially by the processor which leads to a linear relation between the total decoding duration and the time required for a binary operation on the hardware platform, denoted as $T_b$ \cite{bistritz_an_efficient}. Thus, the total transmission and decoding latency for the transmission of a codeword of blocklength $n$ can be written as
\begin{equation}
L_A = n T_s +  k K(\mathcal{C},s) T_b ,
\label{eq_total_latency_with_c}
\end{equation}
where $ k K(\mathcal{C},s) $ is the total number of binary computations required for decoding. A more accurate estimation on the decoding time can be done by investigating the algorithmic efficiency \cite{kienle_on_complexity} and specific hardware specifications such as, memory timings, buffer management, etc. \cite{wilhelm_the_worstcase}. Even though such an investigation is out of the scope of this paper, a discussion on the availability of parallel processing and its effect on \eqref{eq_total_latency_with_c} is included in Section \ref{sec_discussions}.

Suppose the latency constraint is
\begin{equation}\label{eq_latency_deadline}
L_A \leq L_M,
\end{equation}
where $ L_M $ represents the maximum allowed latency. In general, two different communication strategies can be considered in URLLC applications: \textit{i}) continuous-mode, \textit{ii}) sporadic/bursty-mode \cite{bennis_ultra_reliable, kim_ultrareliable}. A typical assumption for continuous-mode transmission is that decoding resources are chosen so that $ L_D $ is upper-bounded by $ nTs $. 
On the other hand, as discussed in \cite{kim_ultrareliable} and \cite{feng_dynamic_resource}, URLLC traffic can be event-driven, therefore, sporadic. The focus in this paper is on sporadic/bursty-mode communication, where a single packet is transmitted with a latency constraint on $ L_A $. Thus, depending on $ L_A $, $ L_D $ can be longer that $ nT_s $.  

The constraint in \eqref{eq_latency_deadline} imposes an upper bound on the \textit{per-information-bit} decoder complexity such that
\begin{equation}\label{eq_upper_bound_on_K}
K(\mathcal{C},s) \leq \frac{L_M-nT_s}{kT_b} ,
\end{equation}
as long as $ L_M \geq nT_s $. For fixed $ n $ and $ k $, the constraint in \eqref{eq_latency_deadline} restricts the order$ -s $ as follows
\begin{equation}
s \leq s_m = \underset{\{s| s \in \mathbb{Q^+}, ~ L_A \leq L_M \}}{\mathrm{arg~max}} K(\mathcal{C},s),
\label{eq_general_bound_on_s}
\end{equation}
where $ s_m $ denotes the maximum allowed order. Due to the sum of binomial coefficients, used while calculating $ |\mathcal{L}_{\text{TEP}}| $, a closed-form expression on $ s_m $ does not appear to be obtainable. However, an upper bound on the \textit{per-information-bit} complexity that gets tighter with larger $ s $ can be derived by using \cite[Lemma 3.6]{gray_entropy}
\begin{equation}\label{eq_upper_bound_on_K_2}
K(\mathcal{C},s) \leq nk + 2^{k h\left( \frac{s_m}{k} \right)-1} \left( n - q + \frac{qn}{k} \right),
\end{equation}
where  $ h(z) = -z \log(z) - (1-z) \log(1-z) $ is the binary entropy function. From \eqref{eq_upper_bound_on_K_2} we get
\begin{equation}\label{eq_bound_with_entropy_func}
h\left( \frac{s_m}{k} \right) \geq \frac{1}{k} \left( 1+\log \tau \right) , 
\end{equation}
where $ \tau = \frac{1}{n(k+q)-qk} \Big( \frac{L_M - nT_s}{T_b} - nk^2 \Big)  $. A lower bound on $ s_m $ can be numerically evaluated from \eqref{eq_bound_with_entropy_func} since the binary entropy function is monotonically increasing for $ \frac{s_m}{k} \leq \frac{1}{2} $. Finally, using the tight approximation for binary entropy function, $ h(z) \approx \left( 4z(1-z) \right)^{3/4} $, we obtain
\begin{equation}
s_m \approx \frac{k}{2} \left( 1 - \sqrt{1 - \left( \frac{1 + \log \tau}{k} \right)^{4/3}} \right).
\label{eq_bound_on_s}
\end{equation}

However, we note that a constraint on order$ -s $ may lead to a degradation in the CEP performance of the OS decoder. In particular, if $ s_m < s_r $, the CEP of the most complex allowable decoder will be appreciably higher in comparison to the ML CEP bound.

\subsection{Power Penalty} 

It is shown in \eqref{eq_general_bound_on_s} that the selection of an order$-s$ for a particular code of fixed $n$, $k$ and $\rho$ can be used to control the aggregate latency $L_A$ of the communication, albeit at the expense of reduced reliability. In Fig. \ref{fig_CER_n_128_k_64} for a fixed SNR the lowest CEP is given by the $ \epsilon_m $ curve. Constraining the order$ -s $ of the decoder, though, incurs a CEP degradation that is a vertical upwards step to the curve with corresponding $ s $. In order to satisfy a desired target reliability, a power penalty, i.e., an amount of excess power, has to be paid. Visually, this can be represented as a horizontal rightward step in Fig. \ref{fig_CER_n_128_k_64}. Hence, an interesting, yet complex, relation between power, aggregate latency, decoding complexity arises.

\begin{definition}[Power penalty]
	Fix a codebook $ \mathcal{C} $ of $ 2^k $ codewords of blocklength $ n $. For a reference SNR, $ \rho_r $, consider the CEP, $ \epsilon  $, given by the normal approximation in \eqref{eq_normal_approximation} and the suboptimal decoder $ \mathrm{d}(\mathcal{C},s, \rho)  $ that achieves $ \epsilon $ at SNR $ \rho $. The quantity  
	\begin{equation}
	\Delta\rho = \rho - \rho_r
	\end{equation}
	is the power penalty required, such that the  suboptimal decoder can achieve the same CEP as \eqref{eq_normal_approximation}.
\end{definition} 

For a fixed rate $ r=\frac{k}{n} $ the reference SNR $ \rho_r $ can be computed by taking the inverse of \eqref{eq_normal_approximation},
\begin{equation}\label{eq_reference_power}
\rho_r = R^{-1}(n,r,\epsilon) .
\end{equation}
Notice that $ R(n,\rho,\epsilon) $ is strictly increasing in $ \rho $ and therefore invertible. Although there is no closed form expression of $ \rho_r $ for BI-AWGN channels, it can be numerically evaluated. 

Since the ML decoder minimizes CEP, it holds that $ \Delta\rho \geq 0 $. For a family of codes that does not achieve the bound even with ML decoding, it holds $ \Delta\rho\geq \Delta\rho_{\text{ML}}>0 $, where $\Delta\rho_{\text{ML}}$ is the power gap of the best code within that family of codes from the normal approximation. Theoretically, it is possible to operate at rate $ r $ with SNR $ \rho_r $ (or $ \rho_r+\Delta\rho_{\text{ML}} $ for codes not achieving the bound), however, a possibly prohibitively complex decoder is required for such a power-rate selection. Similar empirical results are also presented in \cite{shivarnimoghaddam_short_block} and \cite{celebi_low_latency}. Extensive studies on OS decoders reveal that this exponential increase is similar at all rates for fixed $ n $ \cite{celebi_low_latency}. 

It is clear from the above that a model is required to quantify the power penalty. Bounds on the performance of OS decoders are available \cite{fossorier_soft_decision,dhakal_on_the}, however, these bounds are mathematically intractable for further analytical analysis. In Fig \ref{fig_complexity_powerpenalty_tradeoff}.a we plot the required SNR values so that an OS decoder of order $s$ achieves CEP $\epsilon(\mathcal{C}, s, \rho) = 10^{-5}$ for a codebook of blocklength $n=128$ and various rates. These values were computed via extensive simulations of the respective codes. For the purpose of comparison, we also show the capacity of the BI-AWGN channel with the dashed line and the normal approximation with the solid line. Fig. \ref{fig_complexity_powerpenalty_tradeoff}.a illustrates the following: \textit{i}) The performance of OS decoders closely approaches $ R(n,\rho,\epsilon) $ if $ s $ is sufficiently high. \textit{ii}) As the decoding complexity increases with increasing $ s $, the power penalty required for the desired CEP decreases. \textit{iii}) Conversely, an aggregate latency constraint, which implies a decoding complexity constraint, i.e., an upper bound on the order $ s $, leads to a corresponding power penalty, if a desired CEP is to be guaranteed.

\begin{figure}
	\centering
	\begin{subfigure}[b]{0.49\textwidth}
		\includegraphics[width=1\linewidth]{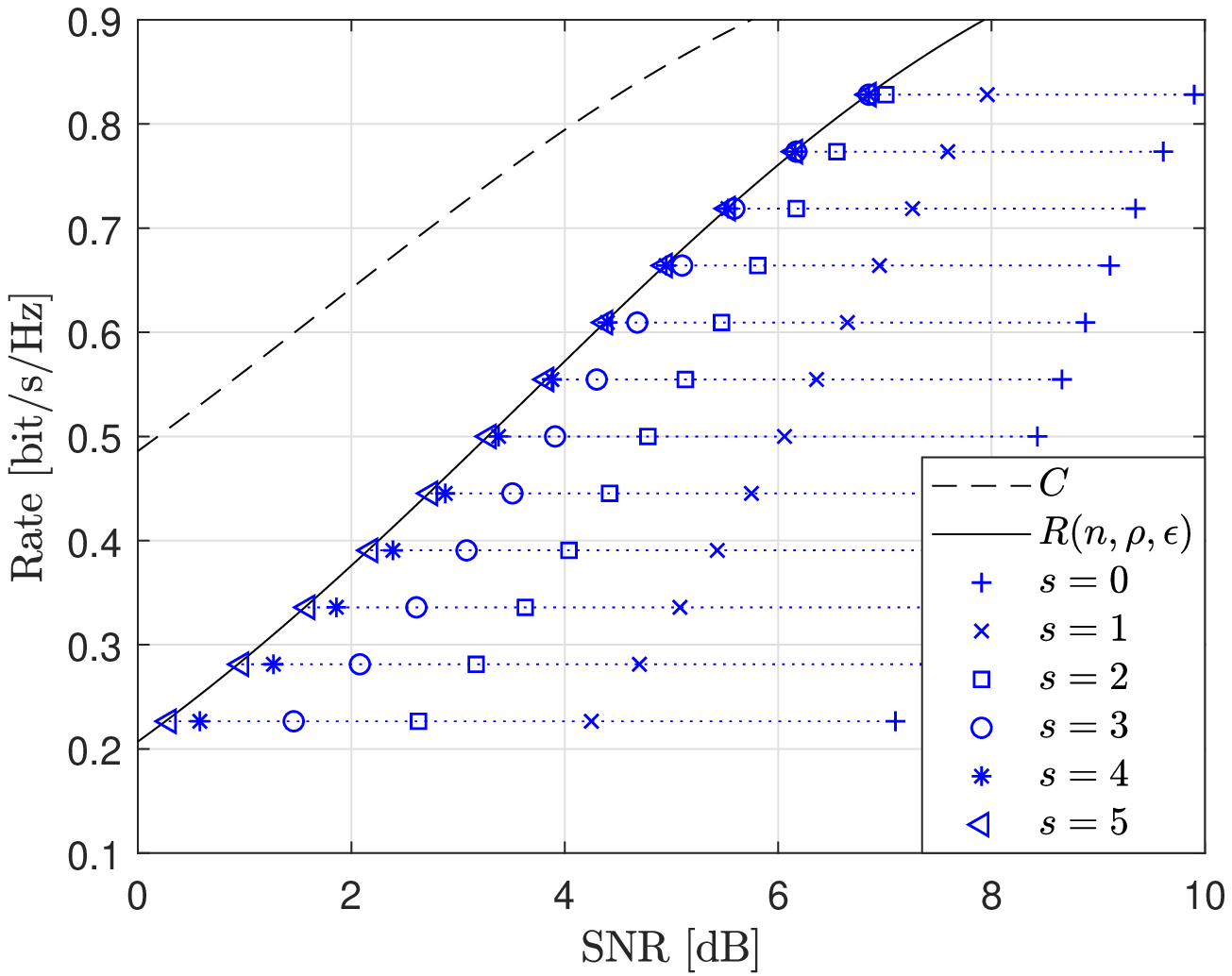}
		\caption{}
		\label{fig_power_vs_rate_w_decoders} 
	\end{subfigure}
	
	\begin{subfigure}[b]{0.49\textwidth}
		\includegraphics[width=1\linewidth]{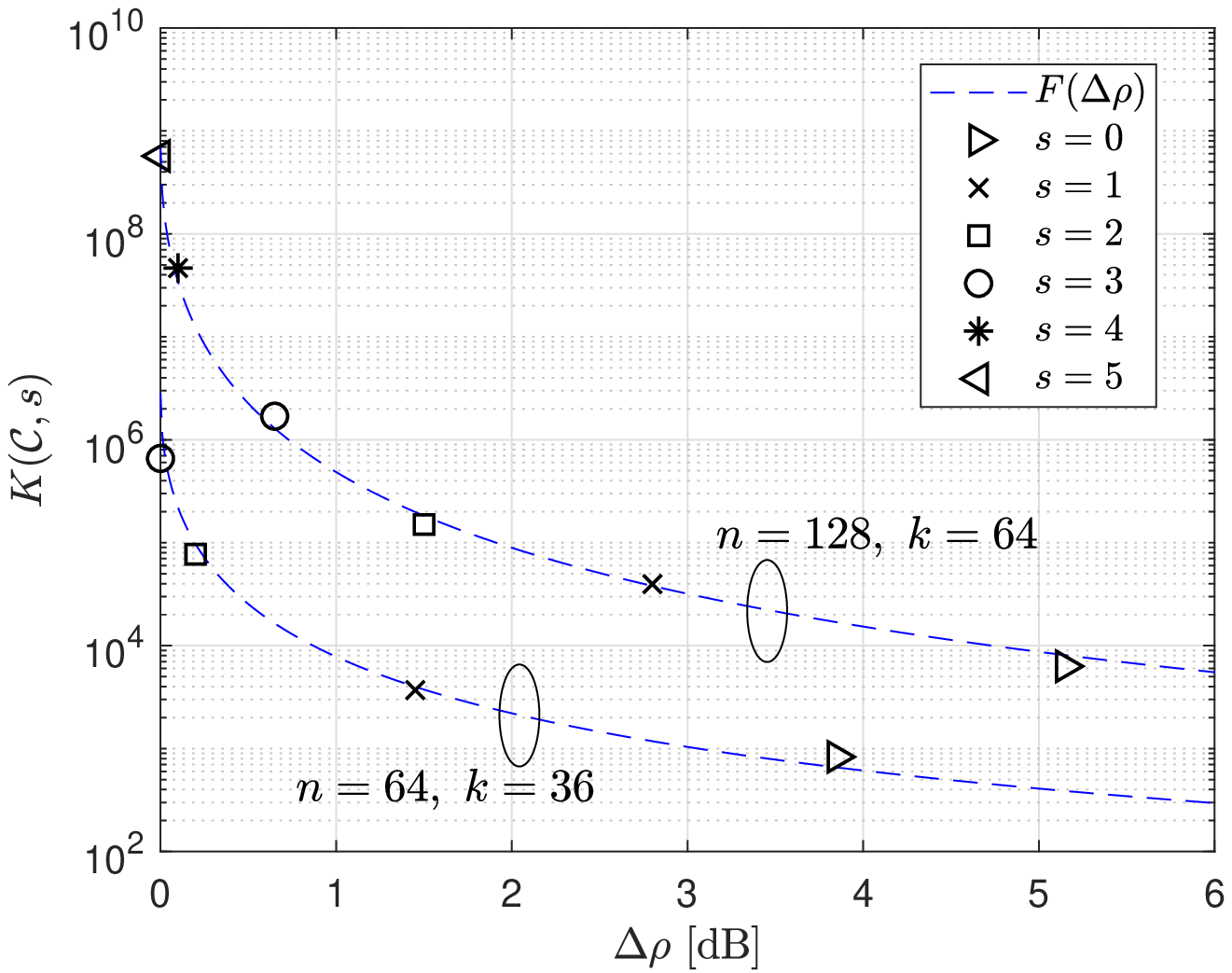}
		\caption{}
		\label{fig_q_vs_deltaP_n_128_64}
	\end{subfigure}
	
	\caption{(a). Power requirements of OS decoders with different
		orders at different rates for $ \epsilon=10^{-5} $ when $ n=128 $. (b). Power penalty values of OS decoders at different orders versus their complexities for $ n=128 $, $ k=64 $, and $ n=64 $, $ k=36 $ where $ \epsilon=10^{-5} $.}
	\label{fig_complexity_powerpenalty_tradeoff}
\end{figure}

In Fig. \ref{fig_complexity_powerpenalty_tradeoff}.b the total number of binary operations \textit{per-information-bit} is plotted as a function of the power penalty for order$ -s = \{0,1,2,3,4,5\} $ where $ n=128 $, $ k=64 $ and $ n=64 $, $ k=36 $ codes. Similar numerical results have also been produced for various $ n $ and $ k $ values for fixed $ \epsilon $ and it has been observed that for all cases, the relation between the logarithm of the computational complexity, $ \log K( \mathcal{C},s ) $, and power penalty can be modeled by a law of the type 
\begin{equation}
F(\Delta\rho) \overset{\Delta}{=} \frac{1}{a \sqrt{\Delta\rho} + b},
\label{eq_q_delta_tradeoff}
\end{equation}
with appropriate choices of the constants $a > 0 $ and $b > 0 $. This model describes in an accurate and tractable way the trade-off between decoding complexity and power penalty for practical finite-length codes. The coefficients $ a $ and $ b $ can be found with an iterative approach that searches the values which minimize the mean square error between the logarithmic computational complexity of decoders at the specified power gap and the model $ F(\Delta\rho) $. 

Based on extensive numerical simulations, it is observed that for fixed $ n $, the  values of $a$ and $b$ do not appreciably change as $ k $ varies. Therefore, for simplicity, we assume that $a$ and $b$ are functions of $ n $ only. As $ \Delta\rho \rightarrow 0 $, $ F(\Delta\rho) = 1/b $ and the ultimate complexity of an OS decoder that can achieve the benchmark is $ \approx 2^{1/b} $. Given that $ a $ and $ b  $ are strictly positive, $ F(\Delta\rho) $ is a monotonically decreasing function in $ \Delta\rho $, since 
\begin{equation}\label{eq_f_deltaP_mon_decrease}
F' = -\frac{a}{2 \sqrt{\Delta\rho} \left( a\sqrt{\Delta\rho} + b \right)^2} < 0 .
\end{equation}
The monotonicity of $ F(\Delta\rho) $, which follows from \eqref{eq_f_deltaP_mon_decrease} is not imposed by the authors, but is a consequence of the behaviour based on Fig. \ref{fig_complexity_powerpenalty_tradeoff}.b and is a direct consequence of the decoder's operation as given in Properties 1 and 2. Further, \eqref{eq_f_deltaP_mon_decrease} reveals that a desired CEP can be achieved with a lower complexity decoder as long as sufficient excess power is available, and vice-versa.

\begin{lemma} \label{lemma_min_power_penalty}
	Consider the system model described in Sec. \ref{sec_system_model} and let a constraint $L_A\leq L_M$ with $L_M > nT_s$  imposed on a complexity constrained OS decoder, where the aggregate latency is expressed as \eqref{eq_total_latency_with_c}. Based on the proposed model in \eqref{eq_q_delta_tradeoff}, the minimum amount of power penalty that is required to guarantee a desired CEP is 
	\begin{equation}\label{eq_min_req_power_penalty}
	\Delta\rho_m = \left( \frac{1}{a} \max \left\lbrace \left( \log \frac{L_M-nT_s}{kT_b} \right)^{-1} -b , 0\right\rbrace \right)^2 .
	\end{equation}
\end{lemma}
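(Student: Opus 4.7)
The plan is to convert the latency deadline into a ceiling on the decoder complexity, then use the empirical law \eqref{eq_q_delta_tradeoff} to translate that ceiling into a floor on $\Delta\rho$. First, I would reproduce the argument leading to \eqref{eq_upper_bound_on_K}: since $L_A = nT_s + k\,K(\mathcal{C},s)\,T_b$ by \eqref{eq_total_latency_with_c}, the constraint $L_A \leq L_M$ rearranges to
\begin{equation*}
K(\mathcal{C},s) \;\leq\; \frac{L_M - nT_s}{kT_b},
\end{equation*}
which is well-defined and positive precisely because the hypothesis $L_M > nT_s$ is assumed.

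Next, I would substitute the model. By \eqref{eq_q_delta_tradeoff} we have $\log K(\mathcal{C},s) = F(\Delta\rho) = 1/(a\sqrt{\Delta\rho}+b)$, so taking logarithms in the complexity bound gives
\begin{equation*}
\frac{1}{a\sqrt{\Delta\rho}+b} \;\leq\; \log\frac{L_M - nT_s}{kT_b}.
\end{equation*}
Here I would invoke the monotonicity of $F$ established in \eqref{eq_f_deltaP_mon_decrease}: a tighter complexity ceiling forces a larger $\Delta\rho$, so the minimum admissible power penalty is obtained by treating the inequality as an equality and solving for $\sqrt{\Delta\rho}$. Provided $a,b>0$ and the right-hand side exceeds $1/b$ (equivalently, the complexity ceiling lies below the $\Delta\rho\to 0$ limit $2^{1/b}$), algebraic inversion yields
\begin{equation*}
\sqrt{\Delta\rho} \;\geq\; \frac{1}{a}\left(\Bigl(\log\frac{L_M-nT_s}{kT_b}\Bigr)^{-1} - b\right).
\end{equation*}

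Finally, I would handle the degenerate case. If the complexity ceiling is already loose enough that $(\log\tfrac{L_M-nT_s}{kT_b})^{-1} \leq b$, then the right-hand side above is non-positive while $\Delta\rho\geq 0$ holds automatically (a near-optimal OS decoder of sufficiently large order fits within the latency budget, so no power penalty is required). Wrapping the two regimes into a single expression with the $\max\{\cdot,0\}$ operator and squaring both sides gives precisely \eqref{eq_min_req_power_penalty}.

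The derivation is essentially a one-line inversion, so there is no deep obstacle; the only subtlety I would be careful about is the domain condition. I would explicitly note that $L_M > nT_s$ ensures the logarithm is well-defined, that the monotonicity \eqref{eq_f_deltaP_mon_decrease} of $F$ is what lets the inequality on $\log K$ be inverted into an inequality on $\Delta\rho$ (rather than merely a necessary condition), and that the $\max\{\cdot,0\}$ clipping is needed because the model $F$ is defined only for $\Delta\rho\geq 0$ whereas the pure algebraic inversion would otherwise produce negative values when the latency budget is generous.
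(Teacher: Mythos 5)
Your proof is correct and follows exactly the derivation the paper intends: the paper states this lemma without an explicit proof, but the argument is implicit in combining the complexity ceiling \eqref{eq_upper_bound_on_K} with the inversion of the monotone model \eqref{eq_q_delta_tradeoff}, which is precisely what you do. Your added care about the domain condition (the logarithm being positive and the $\max\{\cdot,0\}$ clipping when the inverted value would be negative) is a welcome bit of rigor the paper glosses over.
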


\begin{remark}
	Lemma \ref{lemma_min_power_penalty} shows the minimum amount of excess power that is needed in order to fulfill the latency and reliability requirements for a complexity constrained OS decoder with BI-AWGN channel.  From (\ref{eq_min_req_power_penalty}), it is clear that for fixed $ n $ and $ T_s $, as $ T_b $ decreases, i.e., the receiver is equipped with a more powerful processor, $ \Delta\rho_m $ decreases and hence the gap to the normal approximation shrinks and vanishes if $ T_b \leq \frac{L_M-nT_s}{k \sqrt[b]{2}} $. On the other hand, for fixed $ n $, if the transmission rate, $ r $, increases, $ \Delta\rho_m $ also increases and the gap to the normal approximation widens.
\end{remark}

The latest argument expressed in Remark 1 can be explained as follows. Recall that selecting the maximum allowed $ K(\mathcal{C},\rho,s) $ leads to the minimum amount of power penalty and, based on the upper bound on \textit{per-information-bit} decoder complexity given in \eqref{eq_upper_bound_on_K}, for fixed $ n $, as $ k $ increases, i.e., when transmitting at higher rates, \eqref{eq_upper_bound_on_K} decreases. Thus, in order to assure this inequality, as $ k $ increases, a simpler decoder, with smaller $s$, is required,  which eventually leads to higher power penalty.

\subsection{Maximal Information Rate with Latency Constraints}

Here, an approximation on the maximal information rate that can be achievable under latency, reliability, and complexity constraints is presented.

\begin{lemma} \label{lemma_max_inf_rate}
	For a complexity constrained receiver with aggregate latency expressed in \eqref{eq_total_latency_with_c}, the maximal achievable information rate subject to latency, $L_A<L_M$ with $L_M>nT_s$, and reliability constraints, denoted as $ M^* $, can be closely approximated as $ M^* \approx M(n,\rho, \epsilon) $ where 
	\begin{equation}\label{eq_maximal_rate_with_latency}
	M(n,\rho, \epsilon) = R(n, \rho - \Delta\rho_m, \epsilon) .
	\end{equation}
\end{lemma}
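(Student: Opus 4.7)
The plan is to combine the definition of power penalty with Lemma~\ref{lemma_min_power_penalty} and exploit the monotonicity of the normal approximation in the SNR argument. Fix the target CEP $\epsilon$ and the transmission SNR $\rho$. By the definition of the power penalty, any suboptimal decoder $\mathrm{d}(\mathcal{C},s,\rho)$ that achieves CEP $\epsilon$ at SNR $\rho$ performs, in terms of attainable $(n,k,\epsilon)$ triples, like a benchmark decoder operating at a reduced reference SNR $\rho_r = \rho - \Delta\rho$. Consequently, the rate $r = k/n$ that the constrained system can reliably sustain is well approximated by the inverse of the normal approximation evaluated at the reference SNR, namely $r \approx R(n,\rho_r,\epsilon) = R(n,\rho-\Delta\rho,\epsilon)$.

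Next I would argue the maximization step. Since $R(n,\cdot,\epsilon)$ is strictly increasing in its SNR argument (this was already invoked to justify the inversion in \eqref{eq_reference_power}), the maximal achievable information rate is obtained by making $\rho-\Delta\rho$ as large as possible, i.e., by making $\Delta\rho$ as small as possible subject to the latency-induced complexity constraint \eqref{eq_upper_bound_on_K}. Lemma~\ref{lemma_min_power_penalty} supplies exactly this minimum, $\Delta\rho_m$, as given in \eqref{eq_min_req_power_penalty}. Substituting $\Delta\rho = \Delta\rho_m$ into $R(n,\rho-\Delta\rho,\epsilon)$ yields the claimed expression $M(n,\rho,\epsilon) = R(n,\rho-\Delta\rho_m,\epsilon)$.

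Finally, I would address the feasibility of the construction to justify that the bound is tight (up to the modeling approximation). By Lemma~\ref{lemma_min_power_penalty}, the minimum power penalty $\Delta\rho_m$ is attained precisely when the decoder order $s$ is pushed to the maximum $s_m$ compatible with \eqref{eq_upper_bound_on_K}; Property~1 guarantees that such a decoder exists and achieves the lowest CEP among all latency-compliant OS decoders, while Property~2 ensures that raising the transmission SNR monotonically decreases CEP. Together these give the two directions: (i) no rate above $R(n,\rho-\Delta\rho_m,\epsilon)$ is achievable under the latency and reliability constraints, since that would require a power penalty smaller than the minimum allowed by \eqref{eq_min_req_power_penalty}; (ii) a rate arbitrarily close to $R(n,\rho-\Delta\rho_m,\epsilon)$ is achievable by choosing the order-$s_m$ OS decoder.

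The main obstacle, and the reason the statement is phrased with ``$\approx$'', is that both the normal approximation \eqref{eq_normal_approximation} and the empirical model \eqref{eq_q_delta_tradeoff} are approximations rather than exact identities, so the conclusion inherits their accuracy. I would therefore make clear in the write-up that the relation $M^* \approx M(n,\rho,\epsilon)$ is to be read modulo the $\mathcal{O}(\log n / n)$ term dropped from \eqref{eq_normal_approximation} and the residual fitting error of the model $F(\Delta\rho)$, rather than proving exact matching achievability and converse bounds.
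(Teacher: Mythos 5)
Your argument follows the same route as the paper's proof: convert the latency budget $L_M - nT_s$ into a maximum allowable decoding complexity, use the model \eqref{eq_q_delta_tradeoff} via Lemma~\ref{lemma_min_power_penalty} to obtain the minimum power penalty $\Delta\rho_m$, and then shift the normal approximation to the right by $\Delta\rho_m$, exploiting that $R(n,\cdot,\epsilon)$ is increasing in SNR. The extra detail you supply (the achievability/converse framing via Properties~1 and~2 and the explicit caveat about the $\mathcal{O}(\log n/n)$ and model-fitting errors) is a sound elaboration of the same chain of reasoning rather than a different method.
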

\begin{proof}
	For fixed rate and blocklength $n$ the maximum allowable decoding time can be calculated using \eqref{eq_total_latency_with_c}. This in turn yields the required power penalty $\Delta\rho$ via \eqref{eq_q_delta_tradeoff} which eventually leads to $ \Delta\rho_m $. Finally, according to \eqref{eq_min_req_power_penalty}, $ M(n,\rho, \epsilon) $ can be determined by shifting the normal approximation by $\Delta\rho_m$ to the right.
\end{proof}

\begin{lemma} \label{lemma_M_monotonicity}
	$ M(n,\rho, \epsilon) $ is monotonically increasing in $ \rho $.
\end{lemma}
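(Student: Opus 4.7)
The plan is to reduce the claim to the already-established monotonicity of $R(n,\cdot,\epsilon)$ in its SNR argument (noted right after (19)). The single essential observation is that $\Delta\rho_m$, as written in (25) of Lemma \ref{lemma_min_power_penalty}, does not involve $\rho$ at all.

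First I would inspect the right-hand side of (25) and list the quantities appearing in it: $n$, $L_M$, $T_s$, $k$, $T_b$, $a$, and $b$. None of these depend on the operating SNR: the coefficients $a$ and $b$ of the empirical fit (23) are functions of $n$ alone (as discussed right after that equation), while $L_M, T_s, T_b$ are physical/system constants and $k$ is a design parameter fixed in advance. Hence $\Delta\rho_m$ is constant in $\rho$, and the map $\rho \mapsto \rho - \Delta\rho_m$ is strictly increasing with unit slope.

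Second, I would invoke the stated strict monotonicity of $R(n,\cdot,\epsilon)$ in its second argument. Composing the linear shift with this strictly increasing function yields that $M(n,\rho,\epsilon) = R(n,\rho - \Delta\rho_m, \epsilon)$ is strictly, and hence monotonically, increasing in $\rho$, which is the desired conclusion.

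There is no substantial technical obstacle in this argument; it is really a one-line composition of monotone maps. The only conceptual point that deserves a brief remark, to forestall confusion, is that one might suspect $\Delta\rho_m$ depends implicitly on $\rho$ through $k$ (since $k/n$ is itself an achievable rate). However, Lemma \ref{lemma_min_power_penalty} is stated for an arbitrary but fixed pair $(n,k)$, and Lemma \ref{lemma_max_inf_rate} uses the same fixed $(n,k)$, so treating $\Delta\rho_m$ as $\rho$-independent is consistent with the statement and no implicit dependence needs to be tracked.
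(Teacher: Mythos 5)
Your argument hinges entirely on the claim that $\Delta\rho_m$ in \eqref{eq_min_req_power_penalty} is independent of $\rho$, and that is where it departs from --- and conflicts with --- the paper. The paper's own proof of Lemma~\ref{lemma_M_monotonicity} explicitly treats $\Delta\rho_m$ as varying with the operating SNR: for $\rho^2\geq\rho^1$ it deduces $R(n,\rho^2,\epsilon)\geq R(n,\rho^1,\epsilon)$ and then, via the remark following Lemma~\ref{lemma_min_power_penalty} (higher rate $\Rightarrow$ larger $\Delta\rho_m$), concludes $\Delta\rho_m^2\geq\Delta\rho_m^1\geq 0$ before asserting $M(n,\rho^2,\epsilon)\geq M(n,\rho^1,\epsilon)$. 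The reason the dependence cannot be waved away is the one you raised and then dismissed: in Lemma~\ref{lemma_max_inf_rate}, $M$ is the \emph{maximal} achievable information rate, so $k$ is not a fixed design parameter there --- it is the quantity being determined, $k=nM(n,\rho,\epsilon)$, and it enters $\Delta\rho_m$ through the term $\log\frac{L_M-nT_s}{kT_b}$. This is also what Fig.~\ref{fig_power_vs_rate_w_timeconst} shows: the gap between $M(n,\rho,\epsilon)$ and the normal approximation \emph{widens} as $r$ increases, which would be impossible if $M$ were a constant rightward shift of $R$. So your one-line composition of monotone maps proves a different (weaker) statement, valid only for a fixed $(n,k)$ pair, and skips the actual difficulty of the lemma, namely that $\rho$ and $\Delta\rho_m(\rho)$ increase together and one must argue the net effect on $R(n,\rho-\Delta\rho_m,\epsilon)$ is still an increase.

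To be fair, the paper's own final step (from $\Delta\rho_m^2\geq\Delta\rho_m^1$ to $M(n,\rho^2,\epsilon)\geq M(n,\rho^1,\epsilon)$) is itself terse and does not follow purely formally from the displayed inequalities; the clean way to close the gap --- for either proof --- is an operational/feasibility argument: any configuration $(k,s,\Delta\rho)$ that meets the latency, reliability, and complexity constraints at SNR budget $\rho^1$ remains feasible at $\rho^2\geq\rho^1$ (simply do not use the surplus power), so the supremum of achievable rates cannot decrease. If you want to salvage your approach, replace the ``$\Delta\rho_m$ is constant'' premise with this set-inclusion argument; as written, the premise fails under the paper's intended semantics.
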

\begin{proof}
	Let us introduce the following two maximal rates: $ R(n, \rho^1, \epsilon) $ and $ R(n, \rho^2, \epsilon) $. Suppose that $ \rho^2 \geq \rho^1 $, then using the monotonic structure of the channel capacity \cite{agrell_conditions_for}, $ R(n,\rho^2, \epsilon) \geq R(n,\rho^1, \epsilon) $, and therefore, using Remark 2, $ \Delta\rho_m^2 \geq \Delta\rho_m^1 \geq 0$. Hence, $ M(n,\rho^2, \epsilon) \geq M(n,\rho^1, \epsilon) $.
\end{proof}

\begin{figure}[t]
	\centering
	\includegraphics[width=1\linewidth]{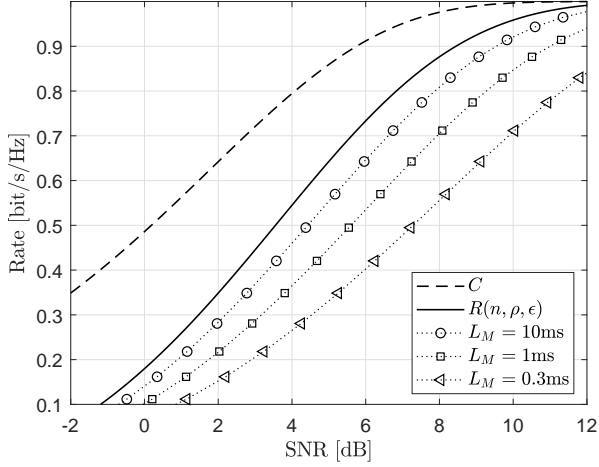}
	\caption{Maximum achievable rates under latency and complexity constraints for $ n=128 $, $ \epsilon = 10^{-5} $, $ T_s = 1\,\mu$s, and $ T_b=1\,$ns.}
	\label{fig_power_vs_rate_w_timeconst}
\end{figure}

In Fig. \ref{fig_power_vs_rate_w_timeconst} the information rate is plotted as a function of the SNR in dB. The dashed line corresponds to the capacity of the BI-AWGN channel and the solid line to the normal approximation for $n=128$ and $\epsilon=10^{-5}$. The remaining three plots in the figure correspond to maximal information rate when latency constraints $L_M=\{10,1,0.3\}\,$ms are imposed. It is assumed that the symbol interval is $T_s=1\,\mu$s and the time required for a binary operation is $T_b=1\,$ns. One can see that the achievability bound shifts to the right more as the constraint on time shrinks. It can be also observed that the gap between normal approximation and $ M(n,\rho, \epsilon) $ widens as $ r $ increases as it is mentioned in Remark 2.

Lemma \ref{lemma_min_power_penalty} and Lemma \ref{lemma_max_inf_rate} reveal that constraints on aggregate latency and  decoding complexity limit the maximal information rate. These results are crucial to understand the capabilities of the communication system and to increase the efficiency. Next, we will discuss some non-trivial optimization problems which affect the efficiency of the communication systems.

\section{Optimal Communication with Latency and Decoding Constraints}
\label{sec_opt_problems}

\subsection{Minimization of Aggregate Latency}

We consider the transmission of a packet that contains a fixed number of information bits, $ k, $ and we are interested in minimizing the aggregate latency, $ L_A $, subject to reliability and transmit power constraints. Such an optimization problem can be encountered in scenarios of industrial control, where, e.g., a sensor transmits a fixed-precision measurement or a control message out of a list of $ 2^k $ possible messages.  The formulation of the problem follows
\begin{subequations}
	\label{eq_opt_problem_1}
	\begin{align}
	\underset{n,\epsilon,\rho_r,\Delta\rho,s}{\text{minimize}} & ~~ L_A
	\\
	\text{s.t.}  & ~~\epsilon \leq \epsilon_m , \label{eq_opt_problem_1_const_1}
	\\
	& ~~ \rho_{r} + \Delta\rho \leq \rho_m, \label{eq_opt_problem_1_const_2}
	\\
	& ~~ k/n \leq R(n,\rho_{r}, \epsilon), \label{eq_opt_problem_1_const_3}
	\\
	& ~~  \rho_r\geq 0, ~ \Delta\rho\geq 0, ~ 0\leq s\leq k, ~ k\leq n. \label{eq_opt_problem_1_const_4}
	\end{align}
\end{subequations}
Here, it is assumed that $ T_b $ and $ T_s $ are fixed. The optimization variables are $ n,\, \epsilon,\, \rho_r,\, \Delta\rho,$ and $ s $. \eqref{eq_opt_problem_1_const_1} and \eqref{eq_opt_problem_1_const_2} represent error rate and power budget constraints, respectively. Lastly, \eqref{eq_opt_problem_1_const_3} indicates the maximal achievable rate without decoding complexity constraints, as given by \eqref{eq_normal_approximation}. 

\begin{lemma} \label{lemma_opt_1_equality}
	The optimum point of \eqref{eq_opt_problem_1} is achieved with equality in (\ref{eq_opt_problem_1_const_1}).
\end{lemma}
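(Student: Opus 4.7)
The plan is to argue by contradiction. Let $(n^\star, \epsilon^\star, \rho_r^\star, \Delta\rho^\star, s^\star)$ be any feasible optimum with objective $L_A^\star$, and suppose $\epsilon^\star < \epsilon_m$. I will construct another feasible point with strictly smaller $L_A$, thereby contradicting optimality. The high-level idea is a three-step perturbation that trades excess reliability margin for a looser power constraint, and then trades the freed power budget for a lower-order, hence cheaper, decoder.

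First, I would exploit the fact that $R(n,\rho,\epsilon)$ in (3) is strictly increasing in $\epsilon$, because $Q^{-1}(\cdot)$ is strictly decreasing. Raising $\epsilon^\star$ up to $\epsilon_m$, with all other variables fixed, strictly increases the right-hand side of \eqref{eq_opt_problem_1_const_3}, so that constraint becomes slack while \eqref{eq_opt_problem_1_const_1}, \eqref{eq_opt_problem_1_const_2}, and \eqref{eq_opt_problem_1_const_4} remain satisfied. Next, using the monotonicity and invertibility of $R$ in $\rho$ noted just after \eqref{eq_reference_power}, I decrease $\rho_r^\star$ to the unique $\rho_r' < \rho_r^\star$ that restores \eqref{eq_opt_problem_1_const_3} to equality at $\epsilon_m$. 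This leaves the power budget \eqref{eq_opt_problem_1_const_2} strictly slack, since $\rho_r' + \Delta\rho^\star < \rho_r^\star + \Delta\rho^\star \le \rho_m$. Finally, I lift $\Delta\rho^\star$ to $\Delta\rho' = \rho_r^\star + \Delta\rho^\star - \rho_r' > \Delta\rho^\star$, which just saturates \eqref{eq_opt_problem_1_const_2}. By the proposed law \eqref{eq_q_delta_tradeoff} and the strict monotonic decrease of $F$ established in \eqref{eq_f_deltaP_mon_decrease}, $F(\Delta\rho') < F(\Delta\rho^\star)$, so the order $s'$ with $\log K(\mathcal{C},s') = F(\Delta\rho')$ satisfies $K(\mathcal{C},s') < K(\mathcal{C},s^\star)$ (also consistent with Property 1, since $K$ is monotone in $s$). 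Substituting into \eqref{eq_total_latency_with_c} gives $L_A' = n^\star T_s + k K(\mathcal{C},s')T_b < L_A^\star$, the desired contradiction.

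The main obstacle I anticipate is purely a feasibility bookkeeping issue: I need to ensure that the intermediate $\rho_r'$ and the resulting $s'$ remain in the admissible sets of \eqref{eq_opt_problem_1_const_4}, i.e.\ that the required $s' \in \mathbb{Q}^+$ with $0 \le s' \le k$ indeed realises $F(\Delta\rho')$. Since $\epsilon^\star < \epsilon_m$ leaves room for arbitrarily small $\epsilon$-perturbations, continuity of $R$, $Q^{-1}$ and $F$ lets me take $\Delta\rho' - \Delta\rho^\star$ as small as needed, so $s'$ lies in an arbitrarily small neighbourhood of $s^\star$ and feasibility is automatic within the modelling assumptions of Section~\ref{sec_complexity_constrained}. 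Hence every optimum must satisfy $\epsilon = \epsilon_m$.
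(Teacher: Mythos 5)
Your proof is correct in substance, but it takes a genuinely different route from the paper's. The paper's argument never touches $\epsilon$, $\rho_r$, or $\Delta\rho$: it fixes the operating SNR $\rho_r^*+\Delta\rho^*$ and simply lowers the order to $s^*-\sigma$ until the CEP rises to meet $\epsilon_m$; by Property~1 the resulting decoder has strictly smaller $K(\mathcal{C},s)$, hence smaller $L_A$ via \eqref{eq_total_latency_with_c}, and that is the whole contradiction. (It also notes separately that the feasible set for $n$ is largest at $\epsilon=\epsilon_m$, which you sidestep by keeping $n^\star$ fixed --- legitimate, since you only need one strictly better feasible point.) Your three-step perturbation ($\epsilon\uparrow\epsilon_m$, then $\rho_r\downarrow$ to restore \eqref{eq_opt_problem_1_const_3}, then $\Delta\rho\uparrow$ to saturate \eqref{eq_opt_problem_1_const_2}) reaches the same conclusion but routes the slack through the power budget and the empirical law \eqref{eq_q_delta_tradeoff}. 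This buys a more explicit picture of where the saved complexity comes from, but it costs you an extra modelling assumption you should make explicit: the coefficients $a,b$ in $F(\Delta\rho)$ are fitted \emph{for a fixed target CEP}, so when you compare $F(\Delta\rho')$ against $F(\Delta\rho^\star)$ you are implicitly comparing two different curves (target $\epsilon_m$ versus target $\epsilon^\star<\epsilon_m$). The inequality still goes your way --- a looser CEP target requires no more complexity at the same excess power, which is Property~1 again --- but as written you silently use a single $F$ for both. The paper's one-variable perturbation avoids this entirely and needs only the elementary monotonicities of Properties~1 and~2, which is why it is the cleaner argument; your version additionally inherits the same boundary caveat (it needs $s^\star>0$ so that a strictly cheaper order exists, just as the paper needs $s^\star-\sigma\ge 0$).
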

\begin{proof}
	We prove the lemma by contradiction. First of all, for fixed $ \rho_r $ and given that $ R(n,\rho_r,\epsilon) \leq R(n,\rho_r,\epsilon_m)$, the feasible set for $ n $ becomes the largest for $ \epsilon=\epsilon_m$. Then assume that the optimal decoder is $\mathrm{d}(\mathcal{C}^*,s^*,\rho_r^*+\Delta\rho^*)$ with $\epsilon(\mathcal{C}^*,s^*,\rho_r^*+\Delta\rho^*) < \epsilon_m$. For some $\sigma>0$ small enough we can find a decoder $\mathrm{d}(\mathcal{C}^*,s^*-\sigma,\rho_r^*+\Delta\rho^*)$ that can achieve $\epsilon(\mathcal{C}^*,s^*-\sigma,\rho_r^*+\Delta\rho^*) = \epsilon_m$. However, the complexity of this decoder is smaller than the optimal one and hence achieves a smaller aggregate latency without violating the CEP constraint.
\end{proof}

The problem now can be further split into a countable sequence of problems, one for every feasible $n$. Fixing $n$ implies that the rate is also fixed, i.e., $r=k/n$. Hence, the reference SNR, $\rho_r$, follows by solving $r=R(n,\rho_r,\epsilon_m)$. It must be noted that a solution to the problem for fixed $n$ can be found only if
\begin{equation}\label{eq_feasibility_condition_opt_1}
\rho_r\leq\rho_m,
\end{equation} 
otherwise the problem is infeasible for the particular $n$. Finally, the problem for fixed $n$, when feasible, can be written as
\begin{subequations}\label{eq_opt_problem_1_sub2}
	\begin{align}
	\underset{\Delta\rho,s}{\text{minimize}} & ~~ K(\mathcal{C},s)
	\\
	\text{s.t.}  & ~~ 0\leq\Delta\rho\leq\rho_m-\rho_r, ~~  0\leq s\leq k .
	\end{align}
\end{subequations}
or equivalently the objective function is the maximization of $ a\sqrt{\Delta\rho}+b $, which is achieved when $\Delta\rho=\rho_m-\rho_r$. The optimal $s$ is given by the following theorem. 

\begin{theorem}
	For a given $ n $, such that the problem is feasible, the corresponding order$ -s $ that minimizes $ l_t $ can be closely approximated to
	\begin{equation}\label{eq_theorem_order_s_selection}
	s \approx \frac{1}{2} \left( k - \sqrt{k^2 + \sqrt[3]{k^2 \eta^4}}\right),
	\end{equation}
	where $ \eta = F\left(\rho_m - \rho_r \right) + 1 - \log n $ .
\end{theorem}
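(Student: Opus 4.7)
The plan is to translate the minimization over $s$ into an equation that $s$ must satisfy under the empirical model, and then solve it using the two approximations already introduced in the paper (the entropy bound on binomials and the closed-form approximation $h(z)\approx(4z(1-z))^{3/4}$). First I would use the structure of the problem just before the theorem: for fixed $n$ the rate $r=k/n$ is determined, $\rho_r$ is determined by inverting the normal approximation at $\epsilon_m$, and the objective reduces to $K(\mathcal{C},s)$ since $L_A=nT_s+kK(\mathcal{C},s)T_b$ and $n,k,T_s,T_b$ are fixed. The constraint $\Delta\rho\in[0,\rho_m-\rho_r]$ together with the monotonicity of $F$ (equation \eqref{eq_f_deltaP_mon_decrease}) forces the minimum-complexity choice to saturate the power budget, so that the achievable $\log K$ equals $F(\rho_m-\rho_r)$, exactly the value already called out in the sentence preceding the theorem.

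Next I would turn this into an explicit equation in $s$. Using the entropy bound on the binomial coefficient $\binom{k}{\lfloor s\rfloor}\leq 2^{kh(s/k)-1}$, which is the same bound the paper invoked to derive \eqref{eq_upper_bound_on_K_2}, and then expressing $K(\mathcal{C},s)$ via \eqref{eq_complexity_of_os_dec} while dropping the low-order $nk$ term and absorbing constants, one obtains the approximation
\begin{equation*}
\log K(\mathcal{C},s)\;\approx\;k\,h\!\left(\tfrac{s}{k}\right)-1+\log\!\left(n-q+\tfrac{qn}{k}\right)\;\approx\;k\,h\!\left(\tfrac{s}{k}\right)-1+\log n .
\end{equation*}
Setting this equal to $F(\rho_m-\rho_r)$ and rearranging yields $k\,h(s/k)\approx F(\rho_m-\rho_r)+1-\log n = \eta$, i.e.\ $h(s/k)\approx \eta/k$, which is the single-variable equation to solve.

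Finally I would invoke the closed-form approximation $h(z)\approx (4z(1-z))^{3/4}$ that the paper already used to pass from \eqref{eq_bound_with_entropy_func} to \eqref{eq_bound_on_s}. Raising the identity $(4(s/k)(1-s/k))^{3/4}=\eta/k$ to the $4/3$ power gives $4s(k-s)\approx k^{2/3}\eta^{4/3}=\sqrt[3]{k^{2}\eta^{4}}$, and hence the quadratic $4s^{2}-4ks+\sqrt[3]{k^{2}\eta^{4}}\approx 0$. Applying the quadratic formula and selecting the physically meaningful root (the one in $[0,k/2]$, consistent with $s\leq s_r$ observed in Fig.~\ref{fig_CER_n_128_k_64}) produces
\begin{equation*}
s\;\approx\;\tfrac{1}{2}\Bigl(k-\sqrt{k^{2}-\sqrt[3]{k^{2}\eta^{4}}}\,\Bigr),
\end{equation*}
which is the asserted expression (up to the sign under the inner root, which I suspect is a typographical slip in the statement).

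The main obstacle is not algebraic but justificatory: making sure that (i) the empirical relation $\log K=F(\Delta\rho)$ can be treated as an equation rather than merely a model fit when optimizing over the discrete/rational parameter $s$, and (ii) the successive approximations (dropping the $nk$ term, using the entropy upper bound as an equality, and using the $h(z)\approx(4z(1-z))^{3/4}$ surrogate) remain mutually consistent and do not move the root by more than the $\mathcal{O}(\cdot)$ error the theorem tolerates via its ``$\approx$''. A short justification that each step introduces only lower-order error in the relevant regime $s\leq s_r\ll k$ would round out the argument.
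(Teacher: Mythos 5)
Your proof follows the paper's own argument exactly: saturate the power budget at $\Delta\rho=\rho_m-\rho_r$ by monotonicity of $F$, set $\log K(\mathcal{C},s)=F(\rho_m-\rho_r)$, and invert via the entropy bound and the surrogate $h(z)\approx\left(4z(1-z)\right)^{3/4}$ ``by the same analogy as'' \eqref{eq_bound_on_s} --- the paper's proof is just a terse pointer to that derivation, which you have filled in correctly. You are also right that the $+$ under the inner square root in \eqref{eq_theorem_order_s_selection} must be a $-$ (otherwise $s<0$); your quadratic $4s(k-s)\approx\sqrt[3]{k^{2}\eta^{4}}$ confirms this.
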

\begin{proof}
	$ F(\Delta\rho) $ is a monotonic decreasing function in $ \Delta\rho $. The complexity of the simplest decoder that meets the constraints can be found while selecting the highest power that is $ \rho_r + \Delta\rho = \rho_m $ and the complexity of this decoder is $ \approx 2^{F\left( \rho_m - \rho_r \right)} $. Finally, \eqref{eq_theorem_order_s_selection} can be obtained by using the same analogy in \eqref{eq_bound_on_s}.
\end{proof}

The optimum selection can be found with exhaustive search over all $ n $ values. A numerical example of the feasible region, denoted as $ S $, for various $ n $ with respect to $ \rho $ is illustrated in Fig. \ref{fig_opt_1}.a where $ k=64 $, $ \rho_m = 5 \,$dB, $\epsilon_m=10^{-5}$, $ T_s = 1 \,\mu$s, and $ T_b = 1 \,$ns. Note that no decoder can be identified until the feasibility condition is met. The optimum, that is shown with a circle, can be found by searching along $ \rho = \rho_m $. A computationally efficient algorithm, linear in $ n $, is proposed in Algorithm 1.

\begin{algorithm}[h]
	\caption{Minimization of $ L_A $} 
	\begin{algorithmic}[1]
		\For {$n=n_{\text{min}},n_{\text{min}+1}, \cdots, n_{\text{max}}$}
		\State\textbf{compute}: $ \rho_r $ from \eqref{eq_reference_power}
		\State\textbf{compute}: $ F(\Delta\rho) $ from \eqref{eq_q_delta_tradeoff}
		\If {$ \rho_m \geq \rho_r $} 
		\State $ K(\mathcal{C},s) = 2^{F(\rho_m - \rho_r)} $
		\Else
		\State $ K(\mathcal{C},s) = \emptyset $
		\EndIf
		\State\textbf{compute}: $ {L_A}(n) = nT_s + kK(\mathcal{C},s)T_b $
		\EndFor
	\end{algorithmic} 
\end{algorithm}

\begin{figure}
	\centering
	\begin{subfigure}[b]{0.49\textwidth}
		\includegraphics[width=1\linewidth]{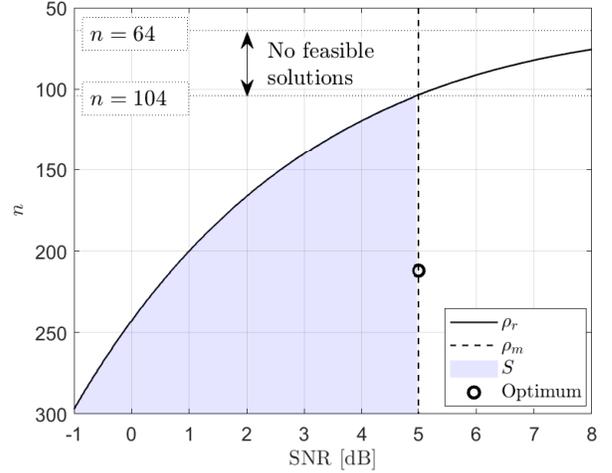}
		\caption{}
		\label{fig_set_realization_opt_1} 
	\end{subfigure}
	
	\begin{subfigure}[b]{0.49\textwidth}
		\includegraphics[width=1\linewidth]{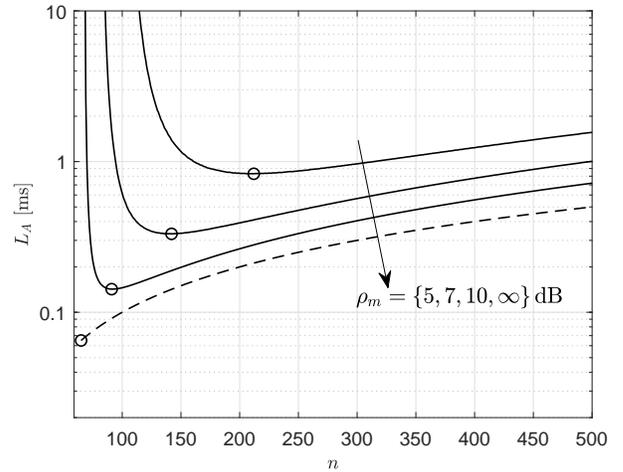}
		\caption{}
		\label{fig_t_t_minimization}
	\end{subfigure}
	
	\caption{(a). Realization of the feasible set $ S $ where $ k=64 $, $ \rho_m = 5 $ dB, $\epsilon_m=10^{-5}$, $ T_s = 1 ~\mu$s, and $ T_b = 1 $ ns. (b). Minimum $ L_A $ with respect to $ n $ for several $ \rho_m $ where $ k=64 $, $\epsilon_m=10^{-5}$, $ T_s = 1 ~\mu$s, and $ T_b = 1 $ ns.}
	\label{fig_opt_1}
\end{figure}

In Fig. \ref{fig_opt_1}.b the aggregate latency is plotted as a function of the codeword length, $n$. It can be seen that for small $n$ the code rate of the selected codebook must be very high. Hence, either the transmission is not possible when the required code rate exceeds \eqref{eq_normal_approximation} or the required decoder must operate close to the normal approximation, which yields high decoding complexity. This translates to very high aggregate latency. As $n$ increases, the required rate is decreasing, hence it is more likely that it can be supported by the power budget or a rate sufficiently far from the normal approximation can be selected. In this case, a decoder with low complexity can be selected and the aggregate latency is dominated by the codeword transmission latency. For power constraints $ \rho_m \in \{5, 7, 10\}\,$dB, the optimal codeword lengths are $ n_{\text{opt}} = \{212, 142, 91\} $, respectively. Infinite $\rho_m$ implies that the symbols are transmitted error free and $ n_{\text{opt}} = k $ since from (\ref{eq_q_delta_tradeoff}), $ kT_b \approx 0 \,$s and hence $ L_A = nT_s $ and linearly increases in $ n $.

\subsection{Minimization of \textit{per-Information-Bit} Energy}

Here, we consider minimizing the \textit{per-information-bit} energy consumption, where the transmission contains a fixed number of information bits, subject to reliability, transmit power, and latency constraints. This optimization problem is significant for communication scenarios where power efficiency is crucial, such as battery powered URLLC systems. A rough analysis may yield the following; minimization of \textit{per-information-bit} energy is proportional to SNR minimization. However, given that a fixed number of $k$ information bits must be transmitted, low SNR values may either lead to theoretically unachievable transmission rates or rates that are very close to the limits and require very complex decoders which may eventually violate the latency constraint. 

The optimization problem can be formulated as
\begin{subequations}
	\label{eq_opt_problem_2}
	\begin{align}
	\underset{n,\epsilon,\rho_r,\Delta\rho,s}{\text{minimize}} & ~~ e_b
	\\
	\text{s.t.}  & ~~\epsilon \leq \epsilon_m , \label{eq_opt_problem_2_const_1}
	\\
	& ~~ L_A \leq L_M \label{eq_opt_problem_2_const_2}
	\\
	& ~~ \rho_{r} + \Delta\rho \leq \rho_m, \label{eq_opt_problem_2_const_3}
	\\
	& ~~ k/n \leq R(n,\rho_{r}, \epsilon), \label{eq_opt_problem_2_const_4}
	\\
	& ~~  \rho_r\geq 0, ~ \Delta\rho\geq 0, ~ 0\leq s\leq k, ~ k\leq n. \label{eq_opt_problem_2_const_5}
	\end{align}
\end{subequations}
where $ e_b = (\rho_{r} + \Delta\rho )/r $ represents the \textit{per-information-bit} energy. Similar to \eqref{eq_opt_problem_1}, it is assumed that the hardware platform is fixed and variables are same. In comparison, an additional aggregate latency constraint is imposed via \eqref{eq_opt_problem_2_const_2}. 

\begin{figure}
	\centering
	\begin{subfigure}[b]{0.49\textwidth}
		\includegraphics[width=1\linewidth]{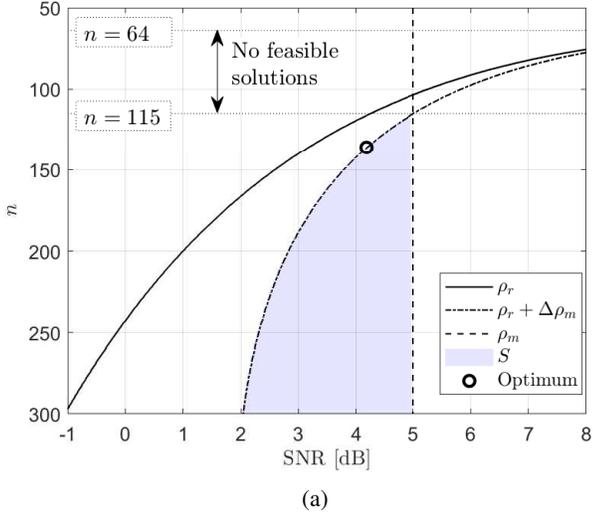}
		\caption{}
		\label{fig_set_realization_opt_2} 
	\end{subfigure}
	
	\begin{subfigure}[b]{0.49\textwidth}
		\includegraphics[width=1\linewidth]{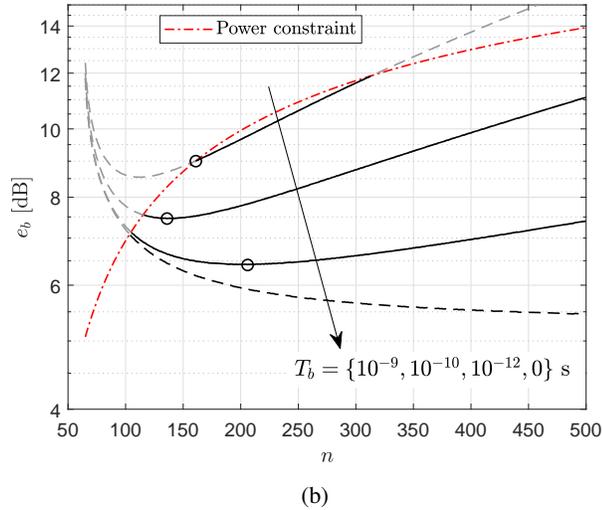}
		\caption{}
		\label{fig_eb_minimization}
	\end{subfigure}
	
	\caption{(a). Realization of the feasible sets for various $ n $ where $ k=64 $, $ \rho_m = 5 $ dB, $\epsilon_m=10^{-5}$, $  L_M = 1$ ms, $ T_s = 1 ~\mu$s, and $ T_b = 0.1 $ ns. (b). Minimum $ e_b $ values for $ k^* = 64 $ for several different complexity constrained receivers where $L_M = 1\,$ms, $ \rho_m=5\, $dB, and $\epsilon_m=10^{-5}$.}
	\label{fig_eb_general}
\end{figure}

\begin{lemma} \label{lemma_opt_2_equality}
	The optimum point of \eqref{eq_opt_problem_2} is achieved with equality in (\ref{eq_opt_problem_2_const_1}).
\end{lemma}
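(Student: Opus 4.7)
The plan is to mimic the contradiction argument used for Lemma \ref{lemma_opt_1_equality}, exploiting the fact that Property 2 lets us trade a relaxation of the CEP target for a reduction in the operating SNR, which in the present problem translates directly into a reduction of the objective $e_b = (\rho_r + \Delta\rho)/r$.

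Suppose an optimal quintuple $(n^*, \epsilon^*, \rho_r^*, \Delta\rho^*, s^*)$ attains $\epsilon^* < \epsilon_m$, and let the associated decoder be $\mathrm{d}(\mathcal{C}^*, s^*, \rho_r^* + \Delta\rho^*)$, which by the power-penalty definition delivers CEP exactly $\epsilon^*$ at SNR $\rho_r^* + \Delta\rho^*$. By Property 2, $\epsilon(\mathcal{C}^*, s^*, \rho)$ is monotonically non-increasing in $\rho$, and by continuity of the CEP in the operating SNR there exists $\sigma > 0$ such that the decoder $\mathrm{d}(\mathcal{C}^*, s^*, \rho_r^* + \Delta\rho^* - \sigma)$ achieves precisely $\epsilon_m$. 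I then construct the candidate solution keeping $n' = n^*$, $s' = s^*$, and $\epsilon' = \epsilon_m$, and redefine the split into reference SNR and power penalty as follows: let $\rho_r'$ be the unique value satisfying $k/n^* = R(n^*, \rho_r', \epsilon_m)$, which exists and satisfies $\rho_r' < \rho_r^*$ because $R(n,\rho,\epsilon)$ is strictly increasing in both $\rho$ and $\epsilon$; then put $\Delta\rho' = \rho_r^* + \Delta\rho^* - \sigma - \rho_r'$.

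Next I verify feasibility of the new point. Constraint \eqref{eq_opt_problem_2_const_1} holds with equality by construction. Constraint \eqref{eq_opt_problem_2_const_2} is preserved because the decoding complexity $K(\mathcal{C}^*, s^*)$ depends only on $n$, $k$, $s$ via \eqref{eq_complexity_of_os_dec}, so $L_A' = n^* T_s + k K(\mathcal{C}^*, s^*) T_b = L_A^* \leq L_M$. Constraint \eqref{eq_opt_problem_2_const_3} is satisfied with strict slack since $\rho_r' + \Delta\rho' = \rho_r^* + \Delta\rho^* - \sigma < \rho_r^* + \Delta\rho^* \leq \rho_m$. Constraint \eqref{eq_opt_problem_2_const_4} holds with equality by the choice of $\rho_r'$. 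The bounds in \eqref{eq_opt_problem_2_const_5} on $n$ and $s$ are inherited, and for sufficiently small $\sigma$ both $\rho_r' \geq 0$ and $\Delta\rho' \geq 0$ hold (the latter follows because $\rho_r' < \rho_r^* \leq \rho_r^* + \Delta\rho^* - \sigma$ whenever $\sigma < \Delta\rho^*$; the boundary case $\Delta\rho^* = 0$ is handled separately by noting that then the decoder already meets the normal approximation, so one may instead decrease $\rho_r$ alone while keeping $\Delta\rho' = 0$). Consequently the new objective is $e_b' = (\rho_r^* + \Delta\rho^* - \sigma)/r < e_b^*$, contradicting optimality.

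The main obstacle is the bookkeeping associated with the simultaneous shift of $\rho_r$ and $\Delta\rho$: the optimization variables are coupled through both the normal-approximation equality and the power-penalty definition, and a careless reassignment could violate nonnegativity of $\Delta\rho$ or the rate constraint. The key observation that resolves this is that only the \emph{sum} $\rho_r + \Delta\rho$ enters the objective and the power budget, so any redistribution between $\rho_r$ and $\Delta\rho$ that leaves the sum intact is harmless, and decreasing the sum by $\sigma$ is precisely what Property 2 permits once the CEP target is relaxed from $\epsilon^*$ to $\epsilon_m$.
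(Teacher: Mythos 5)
Your proof is correct and takes essentially the same route as the paper's: a contradiction argument that lowers the operating SNR by a small $\sigma$ until the CEP rises to exactly $\epsilon_m$ (via Property 2 and continuity), thereby reducing $e_b$ while preserving all other constraints. Your version is merely more explicit in the bookkeeping—redistributing the reduction between $\rho_r$ and $\Delta\rho$ and handling the $\Delta\rho^*=0$ boundary case, which the paper's terser proof sidesteps by assuming $\Delta\rho^*\geq\sigma>0$.
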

\begin{proof}
	Assume that the optimal decoder is $\mathrm{d}(\mathcal{C}^*,s^*,\rho_r^*+\Delta\rho^*)$ with $\epsilon(\mathcal{C}^*,s^*,\rho_r^*+\Delta\rho^*)<\epsilon_m$. However, for some $ \Delta\rho^* \geq \sigma > 0 $ small enough, one can find a decoder $\mathrm{d}(\mathcal{C}^*,s^*,\rho_r^*+\Delta\rho^*-\sigma)$ that can achieve $\epsilon(\mathcal{C}^*,s^*,\rho_r^*+\Delta\rho^*-\sigma)=\epsilon_m$, which requires lower SNR than the optimal one and hence achieves a smaller \textit{per-information-bit} energy consumption without violating the CEP constraint.
\end{proof}

The power constraint in \eqref{eq_opt_problem_2_const_3} is directly proportional to $ e_b $ and limits it such that $ e_b \leq \rho_m/r  $. Further, we fix $ n $ and split the problem into countable sequence of problems. Now, the rate, $ r $, and the reference SNR, $ \rho_r $, are also fixed. For a feasible $ n $, that meets \eqref{eq_opt_problem_2_const_4} with $ \rho_r\geq 0 $, the problem \eqref{eq_opt_problem_2} now reduces to
\begin{subequations}
	\label{eq_opt_problem_2_sub_1}
	\begin{align}
	\underset{\Delta\rho,s}{\text{minimize}} & ~~ \Delta\rho
	\\
	\text{s.t.}  & ~~ L_A \leq L_M \label{eq_opt_problem_2_sub_1_const_1}
	\\
	& ~~ 0 \leq \Delta\rho \leq \rho_m - \rho_{r}, \label{eq_opt_problem_2_sub_1_const_2}
	\\
	& ~~ 0\leq s\leq k . \label{eq_opt_problem_2_sub_1_const_3}
	\end{align}
\end{subequations}

Without the latency constraint, given in \eqref{eq_opt_problem_2_sub_1_const_1}, \eqref{eq_feasibility_condition_opt_1} gives the feasibility condition. However, selecting $ \Delta\rho $ closer to $ 0 $ corresponds to a decoder with high complexity, which may require longer $ L_D $ for complexity constrained receivers and may violate the latency constraint. 
\begin{lemma} \label{lemma_opt_2_feasibility}
	For a feasible $ n $, there is a set of feasible solutions if  
	\begin{equation}\label{eq_feasibility_condition_opt_2}
	\Delta\rho_m \leq \Delta\rho \leq \rho_m - \rho_r
	\end{equation}
	for $ \Delta\rho \geq 0 $. Thus, the feasibility condition is
	$ \rho_r + \Delta\rho_m \leq \rho_m $ .
\end{lemma}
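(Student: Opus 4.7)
The plan is to establish the two-sided bound on $\Delta\rho$ by matching each side to one of the constraints of \eqref{eq_opt_problem_2_sub_1}, and then deduce the feasibility condition by requiring the interval to be non-empty. The upper bound is immediate: constraint \eqref{eq_opt_problem_2_sub_1_const_2} literally enforces $\Delta\rho \leq \rho_m - \rho_r$, and $n$ being a feasible codeword length already guarantees $\rho_r \geq 0$ together with $k/n \leq R(n,\rho_r,\epsilon_m)$. So the whole content of the lemma concerns the lower bound $\Delta\rho \geq \Delta\rho_m$ originating from \eqref{eq_opt_problem_2_sub_1_const_1}.

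For the lower bound I would invoke Lemma \ref{lemma_min_power_penalty} directly. Concretely: for fixed $n$ and $k$, \eqref{eq_total_latency_with_c} combined with the latency constraint $L_A \leq L_M$ forces $K(\mathcal{C},s) \leq (L_M - nT_s)/(kT_b)$, and via the model \eqref{eq_q_delta_tradeoff} this translates into $F(\Delta\rho) \leq \log\!\big((L_M - nT_s)/(kT_b)\big)$. Solving this inequality for $\Delta\rho$ using $F(\Delta\rho) = 1/(a\sqrt{\Delta\rho}+b)$ gives exactly the quantity $\Delta\rho_m$ in \eqref{eq_min_req_power_penalty}; the $\max\{\cdot,0\}$ and the squaring simply encode the two cases in which the latency budget is either strict enough to impose a positive penalty or generous enough that no penalty is required. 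Because $F(\Delta\rho)$ is monotonically decreasing in $\Delta\rho$ by \eqref{eq_f_deltaP_mon_decrease}, the inequality $F(\Delta\rho) \leq \log\!\big((L_M - nT_s)/(kT_b)\big)$ is equivalent to $\Delta\rho \geq \Delta\rho_m$, giving precisely the lower endpoint of \eqref{eq_feasibility_condition_opt_2}.

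Combining the two bounds, the set of feasible $\Delta\rho$ is the interval $[\Delta\rho_m,\, \rho_m - \rho_r]$, and this interval is non-empty if and only if $\rho_r + \Delta\rho_m \leq \rho_m$, which is the claimed feasibility condition. The remaining optimization variable $s$ is automatically accommodated, since for any $\Delta\rho$ in the stated interval one can pick an order-$s$ decoder whose complexity lies within the latency budget, either by taking the largest $s$ allowed by \eqref{eq_general_bound_on_s} or by using the approximation in \eqref{eq_bound_on_s} associated with $s_m$, all within $0\leq s\leq k$.

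The main obstacle is not technical but conceptual: one has to justify carefully that the minimum power penalty in Lemma \ref{lemma_min_power_penalty}, which was derived there as a sufficient amount to meet both reliability and latency, is in fact necessary in the present optimization setting, i.e., that no smaller $\Delta\rho$ is compatible with \eqref{eq_opt_problem_2_sub_1_const_1}. This follows from Lemma \ref{lemma_opt_2_equality}, which pins the CEP constraint at equality so that the power penalty cannot be reduced by trading off reliability, combined with the monotonicity of $F(\Delta\rho)$ which precludes any alternative $(\Delta\rho, s)$ pair with smaller $\Delta\rho$ satisfying the latency constraint. Once this point is made explicit, the remaining algebra is just the elementary rearrangement of \eqref{eq_min_req_power_penalty} against $\rho_m - \rho_r$.
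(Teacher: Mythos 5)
Your proposal is correct and follows essentially the same route as the paper: the lower endpoint $\Delta\rho_m$ comes from Lemma~\ref{lemma_min_power_penalty} (the latency constraint converted through the model \eqref{eq_q_delta_tradeoff} into a minimum power penalty), the upper endpoint is the power budget, and the feasibility condition is non-emptiness of the resulting interval. The paper's own proof is just a terser statement of this same argument; your added discussion of why $\Delta\rho_m$ is also necessary is a reasonable elaboration rather than a different approach.
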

\begin{proof}
	It is shown in \eqref{eq_min_req_power_penalty} that $ \Delta\rho_m $ gives the minimum amount of power penalty that needs to be paid due to the latency constraint for a fixed CEP. Therefore, selecting the minimum excess power as $ \Delta\rho_m $, guaranties \eqref{eq_feasibility_condition_opt_2}.
\end{proof}

Finally, the optimization problem reduces to
\begin{subequations}
	\label{eq_opt_problem_2_sub_2}
	\begin{align}
	\underset{\Delta\rho,s}{\text{minimize}} & ~~ \Delta\rho
	\\
	\text{s.t.}  & ~~  \Delta\rho_m \leq \Delta\rho \leq \rho_m - \rho_{r}, \label{eq_opt_problem_2_sub_2_const_1}
	\\
	& ~~ 0\leq s\leq k . \label{eq_opt_problem_2_sub_2_const_2}
	\end{align}
\end{subequations} 
Hence, the objective function is minimized when $ \Delta\rho = \Delta\rho_m $. It is worth noting that this operating point lies on $ M(n,\rho,\epsilon_m) $. The corresponding order$ -s $ is given in \eqref{eq_bound_on_s}. An efficient algorithm that solves \eqref{eq_opt_problem_2} is shown in Algorithm 2.

\begin{algorithm}[b]
	\caption{Minimization of $ e_b $} 
	\begin{algorithmic}[1]
		\For {$n=n_{\text{min}},n_{\text{min}+1}, \cdots, n_{\text{max}}$}
		\State\textbf{compute}: $ \rho_r $ from \eqref{eq_reference_power}
		\State\textbf{compute}: $ \Delta\rho_m $ from \eqref{eq_min_req_power_penalty}
		\If {$ \Delta\rho_m + \rho_r \leq \rho_m $} 
		\State $ e{_b}(n) = (\Delta\rho_m +\rho_r) / r  $
		\Else
		\State $ {e_b}(n) = \emptyset $
		\EndIf
		\EndFor
	\end{algorithmic} 
\end{algorithm}

Numerical realizations of the feasible set, $ S $, for various $ n $ are demonstrated in Fig. \ref{fig_eb_general}.a for $ k=64 $, $ \rho_m = 5 $ dB. As seen, no feasible point can be identified unless \eqref{eq_feasibility_condition_opt_2} is satisfied. Notice that the optimum point, depicted with a circle, lies on the $ \rho_r+\Delta\rho_m $ line. 

Minimum $ e_b $ values for different $ T_b $ are depicted in Fig. \ref{fig_eb_general}.b where $L_M = 1\,$ms, $ T_s = 1 ~\mu$s, $ \rho_m=5\, $dB, and $\epsilon_m=10^{-5}$. The red dotted line represents the power constraint and a selection above that line is infeasible. Minimum $ e_b $ values at each $ n $ value are depicted for four different receivers such that $ T_b \in \{0, 0.001, 0.1, 1\} $ ns, where $ T_b = 0 $ ns represents infinite computation power. Notice that, due to the power constraint, for the receiver with $ T_b = 1 $ ns, feasible selections exist only in a small portion of $ n $ and the minimum is located where $ \rho_r+\Delta\rho_m = \rho_m $. For the rest, one can claim that as the hardware capability gets better, i.e. $ T_b $ decreases, the optimum selection of $ n $ increases whereas optimum $ e_b $ decreases.

\subsection{Maximization of Total Transmitted Information Bits}

Next, we investigate the following optimization problem: What is the maximum $ k $ that can be transmitted subject to latency, CEP, and power constraints? This problem is crucial in terms of increasing the efficiency of the communication system and can be formulated as
\begin{subequations}
	\label{eq_opt_problem_3}
	\begin{align}
	\underset{n,k,\epsilon,\rho_r,\Delta\rho,s}{\text{maximize}} & ~~ k
	\\
	\text{s.t.}  & ~~\epsilon \leq \epsilon_m , \label{eq_opt_problem_3_const_1}
	\\
	& ~~ L_A \leq L_M \label{eq_opt_problem_3_const_2}
	\\
	& ~~ \rho_{r} + \Delta\rho \leq \rho_m, \label{eq_opt_problem_3_const_3}
	\\
	& ~~ k/n \leq R(n,\rho_{r}, \epsilon), \label{eq_opt_problem_3_const_4}
	\\
	& ~~  \rho_r\geq 0, ~ \Delta\rho\geq 0, ~ 0\leq s\leq k, ~ k\leq n. \label{eq_opt_problem_3_const_5}
	\end{align}
\end{subequations}

Similar to the previous optimization problems, here we show that optimum solution is achieved with equality in \eqref{eq_opt_problem_3_const_1}. The proof is straightforward by using similar analogy that is shown in Lemma \ref{lemma_opt_1_equality} and Lemma \ref{lemma_opt_2_equality}. 

\begin{figure}
	\centering
	\begin{subfigure}[b]{0.49\textwidth}
		\includegraphics[width=1\linewidth]{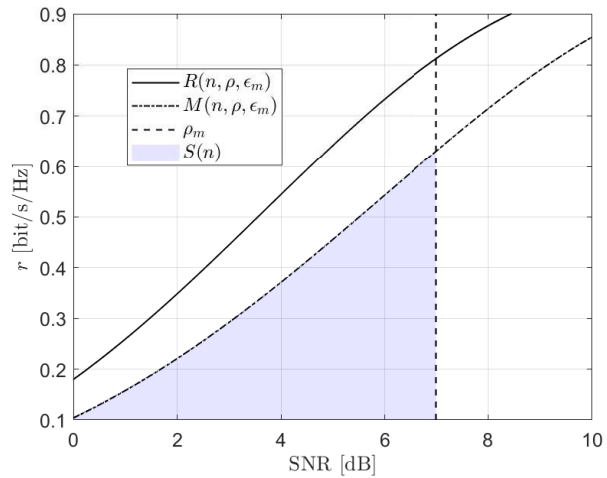}
		\caption{}
		\label{fig_set_realization_opt_3} 
	\end{subfigure}
	
	\begin{subfigure}[b]{0.49\textwidth}
		\includegraphics[width=1\linewidth]{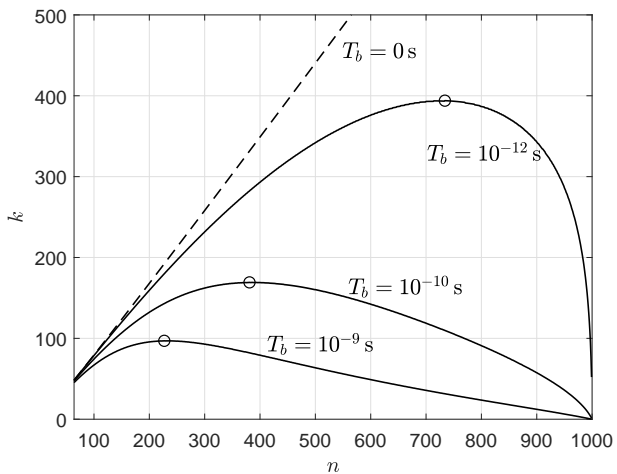}
		\caption{}
		\label{fig_k_maximization}
	\end{subfigure}
	
	\caption{(a). Realization of the feasible set $ S(n=128) $ where $ \rho_m = 7 $ dB, $\epsilon_m=10^{-5}$, $ L_M = 1$ ms, $ T_s = 1 ~\mu$s, and $ T_b = 1 $ ns. (b). Maximum $ k $ for several complexity constrained receivers where $L_M = 1\,$ms, $ \rho_m=7\, $dB, and $\epsilon_m=10^{-5}$. Optimums are shown with circles. }
	\label{fig_opt_3}
\end{figure}

Next, let us first explain the solution to this problem where unlimited computational power is assumed. In this case, a codeword can be decoded instantaneously and therefore $ L_D = 0 $ and all the latency budget can be used for transmission of the codeword, i.e. $n_{\text{inf}}= L_M / T_s$ symbols can be transmitted at a rate that is determined by \eqref{eq_normal_approximation}, which yields
\begin{equation}
k_{\text{inf}} = \big\lfloor n_{\text{inf}} R\left(n_{\text{inf}},\rho_m,\epsilon_m\right) \big\rfloor .
\end{equation}
Notice that SNR is chosen to be $ \rho_m $ due to the monotonic structure of the channel \cite{agrell_conditions_for}. 

However, with decoding complexity constraints, the following trade-off arises. If $n$ is selected small, the available duration for decoding can be sufficient so that a high rate code can be used. As $n$ increases, the available duration for decoding shrinks and a code with decreasing code rate must be selected so that the aggregate latency constraint is satisfied. The solution of such a problem for complexity constrained receivers is not trivial and may need a comprehensive search with various parameters.

Without loss of generality, let us first set $ \epsilon = \epsilon_m $ and fix $ n $ and split the problem into a countable sequence of subproblems. It should be noted that $L_M/T_s$ is an upper bound of $n$. It is shown in Lemma \ref{lemma_opt_2_feasibility} that the latency constraint in \eqref{eq_opt_problem_3_const_2} can be converted to a power penalty constraint. There, it is also shown that the feasibility constraint is $ \rho_r + \Delta\rho_m  \leq \rho_m$. Here, we further extend and instead of converting the latency constraint into a power constraint, using Lemma \ref{lemma_min_power_penalty}, we convert it to a rate constraint. Thus, the problem reduces to
\begin{subequations}
	\label{eq_opt_problem_3_sub_2}
	\begin{align}
	\underset{k,\rho_r,\Delta\rho,s}{\text{maximize}} & ~~ k
	\\
	\text{s.t.}  & ~~ \Delta\rho \leq \rho_m - \rho_{r}, \label{eq_opt_problem_3_sub_2_const_1}
	\\
	& ~~ k/n \leq M(n,\rho_{r}, \epsilon_m), \label{eq_opt_problem_3_sub_2_const_2}
	\\
	& ~~  \rho_r\geq 0, ~ \Delta\rho\geq 0, ~ 0\leq s\leq k, ~ k\leq n. \label{eq_opt_problem_3_sub_2_const_3}
	\end{align}
\end{subequations}

Numerical realization of such a problem is demonstrated in Fig. \ref{fig_opt_3}.a where $ n $ is fixed to $ 128 $ and $ \epsilon_m = 10^{-5} $, $ \rho_m = 7 \,$dB, $ L_M = 1 \,$ms, $ T_s = 1\,\mu $s, and $ T_b = 1 \,$ns. The feasible set is shown with $ S(n) $. Notice that, due to Lemma \ref{lemma_M_monotonicity}, the sub-optimum rate-power selection is the topmost point of the set $ S(n) $, which is also the junction point of $ M(n,\rho, \epsilon_m) $ and $ \rho = \rho_m $, that is $ M(n,\rho_m, \epsilon_m) $. Hence, the solution to the optimization problem in (\ref{eq_opt_problem_3}) follows 
\begin{equation}\label{key}
k^{\text{opt}} = \big\lfloor n^\text{opt} M(n^\text{opt},\rho_m, \epsilon_m) \big\rfloor.
\end{equation}
where $ n^\text{opt} $, the optimum $ n $ that maximizes $ k $, follows $ n^{\text{opt}} = \underset{\{n| n \in \mathbb{N^+}\}}{\mathrm{arg~max}} ~~ n M(n,\rho_m, \epsilon_m) $. 
A computationally efficient algorithm, linear in $ n $, is proposed in Algorithm 3.

\begin{algorithm}[h]
	\caption{Maximization of $ k $} 
	\begin{algorithmic}[1]
		\For {$n=n_{\text{min}},n_{\text{min}+1}, \cdots, n_{\text{max}}$}
		\State\textbf{compute}: $ R(n,\rho,\epsilon_m) $ using \eqref{eq_normal_approximation}
		\State\textbf{compute}: $ \Delta\rho_m $ using \eqref{eq_maximal_rate_with_latency}, $ \forall r \in (0,1] $
		\State\textbf{compute}: $ M(n,\rho,\epsilon_m) $ using \eqref{eq_maximal_rate_with_latency}
		\State\textbf{compute}: $ k(n) = \big\lfloor n M(n,\rho_m,\epsilon_m) \big\rfloor $
		\EndFor
	\end{algorithmic} 
\end{algorithm}

In Fig. \ref{fig_opt_3}.b numerical results that correspond to the investigated scenario are plotted for  $ L_M = 1\, $ms, $ \rho_m=7\,$dB, and $ \epsilon_m=10^{-5} $. Four different choices for execution times for a binary operation are shown: $ T_b \in \{0, 0.001, 0.1, 1\} \,$ns. The previously introduced trade-off is clear here and the maximums appear at $ n^{\text{opt}} = \{ 227, 381, 734, 1000\} $, respectively. Corresponding $ k^{\text{opt}} $ values are $ k^{\text{opt}} = \{96, 169, 393, 901\}$. Ratios of $ k^{\text{opt}} $ values found for complexity constrained receivers to the $ k^{\text{opt}} $ of infinite computation power receiver are $ \approx 0.1, 0.18, 0.43 $, respectively. Thus, one can conclude that if complexity constraints and decoding duration are taken into account, depending on the receiver capabilities, the maximum achievable values are much less than the theoretical limits.

\section{Discussion}
\label{sec_discussions}

\subsection{Other Families of Codes}

Thus far we have reviewed and solved several optimization problems for URLLC applications with complexity constrained OS decoders. Solutions to these problems depend on the model that is introduced in Section \ref{sec_complexity_constrained} where the trade-off between computational complexity and power penalty for a fixed reliability constraint is modeled in a simple way. Although \eqref{eq_q_delta_tradeoff} was derived based on linear block encoder and OS decoders, results in the literature, \cite[Fig. 6]{shivarnimoghaddam_short_block}, \cite[Fig. 6.1 to Fig. 6.9]{lian_performance}, reveal that when it comes to the relation between computational complexity and power penalty in the short block-length regime, other families of codes follow a similar pattern. Here, we further extend our conclusions to the following coding schemes, which are considered as promising solutions for URLLC applications: \textit{i}) TBCCs with list Viterbi decoding \cite{wang_list_viterbi}, \textit{ii}) polar codes under 7-bit cyclic-redundancy-check aided successive cancellation list (SCL) decoding \cite{liva_code_design,niu_crc_aided}, \textit{iii}) binary LDPC codes with min-sum decoder \cite{zarkeshvari_on_implemantation}, and illustrate that their behaviour can be closely modelled by \eqref{eq_q_delta_tradeoff}.\footnote{The field of practical codes is extensive and many tricks can be used to reduce the complexity of a decoder. However, this is beyond the scope of this study. Here, there is no intent to find the optimal decoder in terms of  computational complexity. Instead, we consider reasonably optimized off-the-shelf codes which are also considered as promising solutions for URLLC applications and illustrate that their computational complexity versus power gap behavior follow similar trends to the OS decoder.}

Recall that the complexity-reliability trade-off in OS decoders is controlled with the order$ -s $, whereas this parametrization in TBCCs, polar codes, and LDPC codes can be controlled by the memory size, $ \texttt{M} $, list size, $ \texttt{L} $, and maximum iteration number, $ \texttt{I} $, respectively. Therefore, performance of TBCC with $ \texttt{M} = \{1, 2, 4, 6, 8, 10 ,12, 14\} $ \cite{hehn_ldpc_codes, liva_code_design}, SCL for polar codes with $ \texttt{L} = \{1, 2, 4, 8, 16\} $, and finally min-sum decoding for LDPC with $ \texttt{I} = \{1, 2, 5, 10, 20, 50, 100, 250\} $ are investigated.\footnote{It is worth to note that although the parameter change at the decoder does not effect the encoder structure and complexity in linear coding schemes, it may change the convolutional encoder and increase or decrease its complexity.} CEP results of these codes for $ n=128 $, $ k=64 $ are not shown due to page limitations, however it is seen that performance of TBCC codes is approaching the normal approximation as $ \texttt{M} $ increases. Performance of LDPC and polar codes improves as $ \texttt{I} $ and $ \texttt{L} $ increase. However, although polar codes are performing better than LDPC codes, in terms of CEP, TBCC outperforms both of them, of course, at the expense of a high decoding complexity, which are shown in the next figure. 

\begin{figure}[t]
	\centering
	\includegraphics[width=1\linewidth]{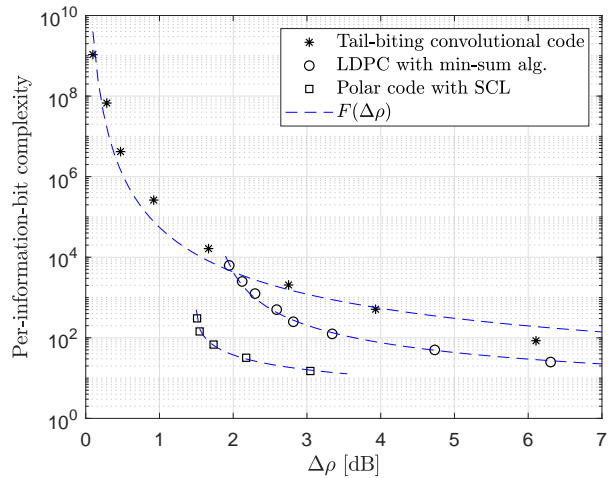}
	\caption{Power penalty values of TBCC, LDPC, and polar codes versus their complexities for $ n=128 $, $ k=64 $ where $ \epsilon=10^{-5} $.}
	\label{fig_q_vs_deltaP_all_codes}
\end{figure}

In Fig. \ref{fig_q_vs_deltaP_all_codes} the total number of binary operations \textit{per-information-bit} is plotted as a function of the power penalty for TBCC, LDPC, and polar codes to achieve $ \epsilon = 10^{-5} $, where the \textit{per-information-bit} complexities of the decoders are obtained from \cite{sybis_channel_coding}. Although, values of $ \texttt{M} $, $ \texttt{L} $, and $ \texttt{I} $ are not depicted, as one can predict, these values are increasing as the decoder approaches the bound. One can see that the trade-off between complexity vs. power penalty for TBCC, LDPC, and polar codes can be closely pursued with the proposed model given in \eqref{eq_q_delta_tradeoff}. Hence, it can be advocated that \eqref{eq_q_delta_tradeoff} is a useful proxy for a general study of URLLC systems with computational complexity constraints.

\subsection{Parallel Processing}

An important feature that can significantly reduce the decoding duration is the availability of parallel processing, which is the possibility of executing multiple computational processes in multiple processors simultaneously. Implementation of parallel processing depends on the parallelizability of the computational task. Suppose that a fraction $ \alpha $ of a computational task is parallelizable, meaning that only the fraction $ \alpha $ of the total task before parallelization can be executed in parallel, whereas the fraction $ (1-\alpha) $ of the task needs to be run sequentially. The theoretical upper bound on the speed-up of the execution duration is addressed by the Amdahl's law \cite{amdahl_validity, hill_amdahls}
\begin{equation}\label{eq_amdahls_law}
U = \frac{L_D}{L_D^{P}} = \frac{1}{ \frac{\alpha}{P} + (1-\alpha) } ,
\end{equation}
where $ U $, $ L_D $, $L_D^{P} $, and $ P $ represent the speed-up in time of the computational task, the total decoding duration on a single processor, the total decoding duration with parallel processing, and the number of parallel processors, respectively. Thus, optimally, the execution time of a task with parallel processing is $ U $ times faster than running the same task sequentially. 

Using \eqref{eq_total_latency_with_c} and \eqref{eq_amdahls_law}, one can relate the speed-up coefficient $ U $ to $ T_b $ as the following
\begin{equation}\label{eq_tb_parallel}
T_b^{P} = \frac{1}{U} T_b ,
\end{equation}
where $ T_b^{P} $ can be named as the average time required for a binary operation in parallel processing, in which all processors in parallel are identical and the execution time of a binary process is $ T_b $ for all. Hence, \eqref{eq_tb_parallel} shows that the effect of parallel computation is linear in $ T_b $ and therefore does not change the analysis in Section \ref{sec_opt_problems} but may change the numerical results since although $ T_b $ is only related with the hardware platform, $ T_b^{P} $ depends on the fraction $ \alpha $.

\section{Conclusions}

The aggregate latency caused by codeword transmission and decoding is the main focus in this study. The empirical model we have presented in this paper can accurately show the trade-off between complexity of OS decoders versus their power gap to the normal approximation. Based on the insights from the proposed model, maximal achievable transmission rates with OS decoders under stringent latency and computational complexity constraints are presented. In particular, our results highlight the effects of these constraints on transmission parameters and hence show that decoding time has a considerable effect on the bounds of the short block-length codes when complexity constraints are taken into account. Next, several optimization problems that aims to increase the transmission efficiency of the URLLC system with OS decoder have been formulated and solved. It is shown that when complexity constraint and decoding duration are considered in a low-latency communication scenario, the optimum selections of the transmission parameters vary significantly compared to the unconstrained decoder scenarios.

\balance

\bibliographystyle{IEEEtran}
\bibliography{bare_jrnl_comsoc}

\end{document}